\title{Chiral life on a slab }
\author[1]{Sergey Alekseev}
\author[3]{Mykola Dedushenko}
\author[1,2]{Mikhail Litvinov}
\affil[1]{Department of Physics and Astronomy, Stony Brook University, Stony Brook, NY 11794-3800, USA}
\affil[2]{C. N. Yang Institute for Theoretical Physics, Stony Brook University, Stony Brook, NY 11794-3800, USA}
\affil[3]{Simons Center for Geometry and Physics,
Stony Brook University, Stony Brook, NY 11794-3636, USA}
\begin{document}

\maketitle

\begin{abstract}
We study chiral algebra in the reduction of 3D $\mathcal{N} = 2 $ supersymmetric gauge theories on an interval with the $\cN=(0,2)$ Dirichlet boundary conditions on both ends.
By invoking the 3D ``twisted formalism'' and the 2D $\beta\gamma$-description we explicitly find the perturbative $\overline{Q}_+$ cohomology of the reduced theory.
It is shown that the vertex algebras of boundary operators are enhanced by the line operators.
A full non-perturbative result is found in the abelian case, where the chiral algebra is given by the rank two Narain lattice VOA, and two more equivalent descriptions are provided. Conjectures and speculations on the nonperturbative answer in the non-abelian case are also given.

\end{abstract}
\newpage

\tableofcontents
\section{Introduction}

The role of vertex operator algebras (VOA) in theoretical physics and mathematics has vastly expanded over the recent decades, way beyond their original scope \cite{Belavin:1984vu,Borcherds:1983sq,FreLepMeur88}.
 This includes, among others, applications of VOAs in differential topology of four- and three-manifolds \cite{Dedushenko:2017tdw,Feigin:2018bkf,Cheng:2022rqr}, and in higher-dimensional QFT  \cite{Beem:2013sza,Beem:2014kka,Costello:2018fnz,Costello:2018swh}.
 While some of these topics are more developed \cite{Beem:2013sza}, the existing literature only scratches the surface of the topological applications and some other topics, such as boundary algebras in \cite{costello2020boundary}.
 Our main interest in the current paper is an interplay between the older role of VOAs as chiral algebras in 2D $\cN=(0,2)$ theories \cite{Witten:1993jg,Silverstein:1994ih}, and their modern  appearance as boundary algebras supported by the $(0,2)$ boundary conditions in 3D $\cN=2$ theories \cite{costello2020boundary}.
 As will be explained later, this is also motivated by applications to the four-manifolds, following \cite{Dedushenko:2017tdw,Feigin:2018bkf}, see also earlier works \cite{Gadde:2013sca,Assel:2016lad}.

The main subject of this work is a 3D $\cN=2$ theory placed on an interval with $\cN=(0,2)$ boundary conditions on both ends.
This basic setup is also explored in a companion paper \cite{DL1} from the more physical perspective, where we compute the effective 2D $\cN=(0,2)$ action in the infrared (IR) limit of the interval-reduced 3D model.
It is known that 2D $\cN=(0,2)$ theories contain VOAs as their chiral algebras in the $\bar{Q}_+$-cohomology of local observables \cite{Witten:1993jg}.
Being invariant under the renormalization group (RG) flow \cite{DedLGChiral}, such a chiral algebra must admit a UV realization in a 3D theory on the interval.
It consists of a pair of (possibly different) VOAs, associated to the boundaries, extended by a category of their bi-modules associated to the appropriate line operators stretched between the boundaries (see an illustration on Fig.
 \ref{fig:boundary-conditions}).

\begin{figure}[h]
	\begin{center}
    \def\svgwidth{0.6\columnwidth}
    \import{./figures/}{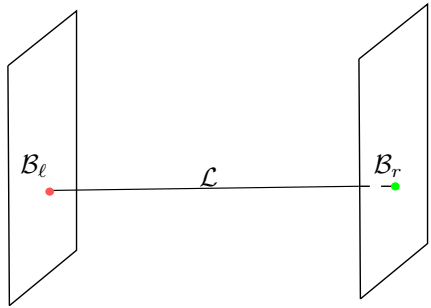}

	\caption{Two boundaries with a line operator connecting them.}
	\label{fig:boundary-conditions}
	\end{center}
\end{figure} 
Generally speaking, one expects various types of line operators to appear, including the descendant lines compatible with our supercharge \cite{costello2020boundary}.
In the simplest case of pure $\cN=2$ super Yang-Mills (SYM) with Chern-Simons (CS) level, which will be our main focus in the bulk of this paper, we will fully describe the spectrum of relevant lines. In particular, Wilson lines (which are descendants of the ghost field) play an important part in this story.

This connects us to another topic that has seen a lot of interest recently, -- line defects and their role in related or similar constructions.
For example, holomorphic boundary conditions in 3D topological QFT (TQFT) may support non-trivial (relative) rational CFT (RCFT), and such TQFT/RCFT correspondence \cite{Witten:1988hf,Frohlich:1989gq} has been studied in great detail \cite{Felder:1999cv,Fuchs:2001am,Fuchs:2002cm,Fuchs:2003id,Fuchs:2004dz,Fr_hlich_2004}.
Bulk topological lines that are parallel to the boundary lead in this case to the boundary topological lines, while the bulk topological lines piercing the boundary give modules of the boundary chiral algebra.
Fusion of such lines corresponds to fusion of the VOA modules.
Note that in this story one also naturally encounters interval reductions: A TQFT on an interval (with the appropriate boundary conditions) leads to the full (non-chiral) RCFT, and segments of line operators stretched between the boundaries generate its primary fields.
A more sophisticated class of examples comes from the topologically twisted 3D $\cN=4$ theories, whose categories of line defects were recently studied in \cite{Dimofte_2020}, see also \cite{Creutzig:2021ext,Garner:2022vds,Garner:2022rwe}.
Such theories may also possess holomorphic boundary conditions \cite{Costello:2018fnz} supporting certain boundary VOAs, and the bulk lines piercing boundaries generate their modules as well.
This leads to interesting questions of determining categories of VOA modules corresponding to the bulk lines \cite{Wenjun_bega}, and it has also been argued that moduli spaces of vacua of the underlying physical theory can be recovered from the knowledge of boundary VOAs and their categories of modules \cite{Costello:2018swh}.

We are interested in similar constructions applied to three-dimensional theories with $\cN=2$ supersymmetry (including 3D $\cN=4$ viewed as $\cN=2$). The $\cN=(0,2)$ boundary conditions \cite{Gadde:2013wq,Okazaki:2013kaa,Yoshida:2014ssa} in such theories support non-trivial boundary VOAs in the $\bar{Q}_+$-cohomology, which is a direct analog of the 2D chiral algebra construction.
The $\cN=(0,2)$ boundary conditions are compatible with the holomorphic-topological (HT) twist in 3D $\cN=2$ \cite{Aganagic:2017tvx,costello2020boundary}. Thus, one may focus entirely on such HT-twisted theories, which often simplifies the VOA-related questions. The HT twist is also compatible with complexified Wilson lines, vortex lines in abelian \cite{Drukker_Vortex} and non-abelian \cite{Hosomichi:2021gxe} cases and their generalizations. Unlike in the topologically twisted theories, these line operators are not fully topological and cannot have arbitrary shape. They are supported along a straight line in the topological direction and are point-like in the holomorphic plane.

In general, one could choose different left and right boundary conditions $\cB_\ell$ and $\cB_r$, leading to the left and right boundary chiral algebras $V_\ell$ and $V_r$ and relations between them.
For example, often $\cB_\ell=\cB$ admits a counterpart $\cB_r=\cB^\perp$, such that the interval theory is trivially gapped in the IR, with the trivial chiral algebra. 
This would imply an interesting ``duality'' between the boundary chiral algebras, perhaps of relevance to some of the questions studied in \cite{Cheng:2022rqr}.
The only pairs of $\cN=(0,2)$ boundary conditions on the interval that appear in the literature so far include: Neumann-Neumann for gauge theories in \cite{Sugiyama:2020uqh}, and Dirichlet-Neumann in \cite{Dedushenko:2021mds,Bullimore:2021rnr}.

In this paper we will focus on the simplest nontrivial case of 3D $\cN=2$ pure SYM with group $G$ and CS level $k$, which is placed on the interval with $\cN=(0,2)$ Dirichlet boundary conditions.
One of the main goals of this paper is to elucidate both the structure of chiral algebra and the underlying physics in this setup.
Perturbatively, each boundary supports the affine VOA $V_n(\mathfrak{g})$, where $\mathfrak{g}={\rm Lie }(G)$ and the level is $n=-h^\vee\pm k$ for the left/right boundary, respectively, with $h^\vee$ being the dual Coxeter number.
The full perturbative VOA (that one may compare to the IR one) is given  by $V_{-h^\vee -k}(\mathfrak{g})\otimes V_{-h^\vee +k}(\mathfrak{g})$ extended by a series of bimodules, realized in this case via SUSY Wilson lines (in all representations) stretched between the boundaries (as in Fig. \ref{fig:boundary-conditions}). 
Non-perturbatively, this answer is significantly modified, and we have a complete understanding for the abelian $G$. 
Partial results, conjectures, and challenges of the non-abelian case will be discussed as well.

The IR description of the interval-reduced theory is identified in the companion paper \cite{DL1} as an $\cN=(0,2)$ non-linear sigma-model (NLSM) into the complexification of the gauge group $G_\C \equiv \exp (\mathfrak{g}\otimes \C)$, with the Wess-Zumino (WZ) level $k$.
This is essentially a \emph{non-compact} $\cN=(0,2)$ version of the Wess-Zumino-Witten (WZW) model, which exists for arbitrary Lie group.
\footnote{Note that this differs from the compact $\cN=(0,2)$ WZW models that were previously constructed for even-dimensional target groups, such as $U(1)\times SU(2)$, which all possess complex structure \cite{Spindel:1988nh,Spindel:1988sr,Sevrin:1988ps,Rocek:1991vk,Rocek:1991az,Ang:2018jqi}.} The HT twist in 3d reduces to the purely holomorphic twist in 2d $\cN=(0,2)$ theories, sometimes called the half-twist\footnote{However, more often than not, the term refers to something else, usually in the context of $\cN=(0,2)$ deformations of the $\cN=(2,2)$ theories, see \cite{Katz:2004nn,Sharpe:2006qd,Melnikov:2007xi,Guffin:2008pi,McOrist:2008ji,Melnikov:2009nh,Kreuzer:2010ph,Melnikov:2010sa,Melnikov:2012hk,Guo:2015gha,Gu:2017nye,Gu:2019byn}}.
The perturbative chiral algebra in such models is related to the theory of chiral differential operators (CDO) \cite{gorbounov1999gerbes,gorbounov2001chiral} on the target or, synonymously, curved $\beta\gamma$ systems \cite{nekrasov2005lectures}, as explored in detail in \cite{witten2007}.
The VOA extracted from the sheaf of CDO on $G_\C$ is denoted $\cD_k[G_\C]$ and is the perturbative chiral algebra in the IR sigma-model.
Here $k$ is the modulus of the CDO corresponding to the B-field on $G_\C$ (proportional to the connection on the canonical WZ gerbe on $G_\C$, with $k$ appearing as the proportionality factor).
From the CDO perspective, i.e. in perturbation theory, $k$ does not have to be quantized, and indeed it labels the complex B-field flux in $H^3(G_\C,\C)$.
In our interval model, however, the B-flux is non-generic and related to the CS level in 3D.
For convenience, in this paper the B-flux $k$ is parameterized in such a way that it is precisely equal to the CS level.
It is a mathematical fact that $V_{-h^\vee -k}(\mathfrak{g})\otimes V_{-h^\vee +k}(\mathfrak{g})$ is a sub-VOA of $\cD_k[G_\C]$ \cite{gorbounov2001chiral}, thus the latter is expected to be an extension of the former by bi-modules.
For generic $k$ (which here means $k\in \C \setminus \mathbb{Q}$) such an extension is indeed known to hold \cite{ArkGai,FRENKEL200657,ZHU20081513,Creutzig:2017uxh,CREUTZIG2022108174,Moriwaki:2021epl}:
\begin{equation}
   \label{eq:CDObg} 
\cD_k[G_\C] \cong \bigoplus_{\lambda \in  P_+} V_{\lambda,-h^\vee +k}\left( \mathfrak{g} \right)  \otimes V_{\lambda^*, -h^\vee -k}\left( \mathfrak{g} \right) ,
\end{equation}
where one sums over the dominant weights $\lambda$. 
Physically, the bi-modules in \eqref{eq:CDObg} come from the Wilson lines connecting the boundaries, as stated earlier.

The decomposition \eqref{eq:CDObg} makes perfect sense in perturbation theory, where we are allowed to treat the CS level as a generic complex number.
At physical integer values of $k$, however, the equation \eqref{eq:CDObg} no longer holds, since the structure of $\cD_k[G_\C]$ as a $V_{-h^\vee -k}(\mathfrak{g})\otimes V_{-h^\vee +k}(\mathfrak{g})$ bi-module becomes much more intricate \cite{ZHU20081513} (for non-abelian $G$).
Not only that, but also the nonperturbative corrections are expected to significantly modify it.
Indeed, $\cD_k[G_\C]$ contains universal affine sub-VOAs $V_{-h^\vee -k}(\mathfrak{g})$ and $V_{-h^\vee +k}(\mathfrak{g})$ \cite{gorbounov2001chiral}, not their simple quotients.
Now, the algebra $V_{-h^\vee-k}(\mathfrak{g})$ for $k>0$, (i.e. at the below-critical level,) is already simple, see \cite[Proposition 2.12]{KL1} in the simply-laced case.
However, for $k>h^\vee$, the affine VOA $V_{-h^\vee +k}(\mathfrak{g})$ living on the ``positive'' boundary is not simple: It is very well understood \cite{Kac:1979fz}, has the proper maximal ideal, and the simple quotient denoted $L_{-h^\vee +k}(\mathfrak{g})$.
It was argued in \cite{costello2020boundary,} that the nonperturbative boundary monopoles implement this simple quotient, at least on the ``positive'' boundary, turning $V_{-h^\vee+k}(\mathfrak{g})$ into $L_{-h^\vee +k}(\mathfrak{g})$. 

Thus we expect the exact chiral algebra (for $k>h^\vee$) to contain $V_{-h^\vee -k}(\mathfrak{g})\otimes L_{-h^\vee +k}(\mathfrak{g})$, not $V_{-h^\vee -k}(\mathfrak{g})\otimes V_{-h^\vee +k}(\mathfrak{g})$, which already differs from $\cD_k[G_\C]$. Furthermore, the 3D bulk is expected to admit only finitely many line operators in the IR. Indeed, it is gapped (at least if the interval is long enough) and given by the $G_{k-h^\vee}$ CS for levels\footnote{The IR behavior at arbitrary $k$ is described in \cite{Bashmakov:2018wts}, see also \cite{Affleck:1982as,Ohta:1999iv,Gomis:2017ixy}.} $k>h^\vee$, which only admits finitely many Wilson lines labeled by the integrable representations of $L_{-h^\vee +k}(\mathfrak{g})$ \cite{Witten:1988hf,book:Kac}. We, therefore, expect the exact VOA to be, roughly, $V_{-h^\vee -k}(\mathfrak{g})\otimes L_{-h^\vee +k}(\mathfrak{g})$ extended by such a finite set of bimodules. This appears to be significantly different from $\cD_t[G_\C]$ (e.g., the latter is an infinite extension for generic $t$).

In this paper, we fully address the abelian case, $G=U(1)$, and give partial results on the nonabelian $G$, including the perturbative treatment outlined earlier.
We also speculate on the kind of nonperturbative corrections in the nonabelian $G_\C$ NLSM that are expected to modify $\cD_k[G_\C]$.
We leave the detailed study of such nonperturbative effects for future work.

In the abelian case, $G_\C= \C^*$, so the IR regime is described by the NLSM into $\C^*$, which was analyzed previously in \cite{Dedushenko:2017tdw}.
This theory of course has a $\C^*$-valued $\beta\gamma$ system for its perturbative chiral algebra $\cD_t[\C^*]$.
We think of it either multiplicatively, i.e., $\gamma$ is $\C^*$-valued, or additively, i.e., $\tilde\gamma\sim\tilde\gamma+2\pi i R$, where $R$ is the radius of the compact boson (to be determined by the CS level).
Since the theory is free, the distinction between perturbative and non-perturbative is slightly formal.
In the case of the $\C^*$ model, the natural nonperturbative completion amounts to including twisted sectors into the theory, which correspond to windings around the nontrivial one-cycle in the target.
The twist fields are vortex-like disorder operators, whose 3D origin is, expectedly, the boundary monopoles.
The presence of such sectors extends the $\beta\gamma$ system by the so-called spectrally flowed modules \cite{Ridout:2014,Allen:2020kkt}.
We show that this results in the lattice VOA for the simplest Narain lattice \cite{Narain:1985jj,Narain:1986am}, i.e., $\Z^2\subset \R^2$ with the scalar product whose Gram matrix is $\left(\substack{0\ 1\\1\ 0} \right)$.
In addition to showing this, we also provide the 3D perspective, where each boundary supports a 1D lattice VOA (abelian WZW \cite{Dimofte:2017tpi}), and Wilson lines extend them by the bi-modules.
In total, we find three presentation of the same VOA: (1) as a Narain lattice VOA; (2) as an extension of the $\beta\gamma$ system; (3) as an extension of the abelian ${\rm WZW}_k\otimes {\rm WZW_{-k}}$ by its bimodules.

The nonabelian $G_\C$ NLSM is interacting, which makes the non-perturbative corrections to the perturbative answer $\cD_k[G_\C]$ much more challenging to quantify.
We are certain from the previous discussion, however, that such corrections must be present.
The known instanton effects in 2D $\cN=(0,2)$ theories are vortices that are best understood in the gauged linear sigma model case \cite{Silverstein:1995re,Basu:2003bq,Bertolini:2014dna} (see also \cite{Losev:1999tu} for the $\cN=(2,2)$ case).
In the NLSMs they are captured by the worldsheet wrapping compact holomorphic curves in the target \cite{Dine:1986zy,Dine:1987bq,Beasley:2005iu}, which is notoriously hard to compute except for the simplest models \cite{Tan_2008}.
In our case, however, $G_\C$ does not support any compact holomorphic curves, which basically implies that there must be some \emph{novel} nonperturbative corrections at play.
They correspond to the boundary monopoles that exist in the parent 3D gauge theory, and can be described as new ``noncompact'' vortices in 2D.
Indeed, monopoles are labeled by the cocharacters $e^{ib\varphi}: U(1)\to G$, up to conjugation, which are complexified to $z^b: \C^* \to G_\C$.
This allows to define a natural vortex singularity for the NLSM field $\phi(z,\bar{z})$:
\begin{equation}
    \phi \sim z^b,\quad \text{as } z\to 0.
\end{equation}
Since it is meromorphic, it will define a half-BPS defect.
Dynamics in the presence of such defects is expected to modify the chiral algebra $\cD_t[G_\C]$ appropriately.
We will explore this elsewhere, while here we only focus on the perturbative aspects when $G$ is non-abelian.

Another subtle point is that the presence of such disorder operators, and hence the non-perturbative corrections, in principle, depends on the UV completion.
We will assume that the 3D gauge theory with compact $G$ admits monopoles.\footnote{Which has far-reaching consequences for its dynamics, as, e.g., in Polyakov's argument for confinement in 3D \cite{Polyakov:1976fu}.} 
Without them, the Dirichlet boundary on the ``positive'' end would appear in tension with unitarity \cite{Dimofte:2017tpi,costello2020boundary} (the same issue is not present on the ``negative'' boundary since it supports a non-compact CFT).
Thus the NLSM originating from the gauge theory on the interval must admit the vortex-type disorder operators and the corresponding non-perturbative phenomena.
On the other hand, if one completes the $G_\C$ NLSM in the UV into the LG model as in \cite{DL1}, there is no room for any vortex defects.
In this case the $\cD_t[G_\C]$ is conjectured to give the exact chiral algebra.
Purely from the viewpoint of NLSM, it is conceivable that the knowledge of metric on the target (which is usually ignored in the BPS calculations, but which was computed in \cite{DL1},) allows one to tell apart the cases with and without the nonperturbative corrections.
This question is likewise left for the future.

\paragraph{Motivation via VOA$[M_4]$.} One of the motivations behind this work is to develop a toolkit for computing VOA$[M_4]$ \cite{Dedushenko:2017tdw}.\footnote{A few recent appearances of the interval compactifications in related contexts include \cite{Gaiotto:2017euk,Prochazka:2017qum,Dimofte:2019buf}.} Recall that the VOA$[M_4]$, or more precisely VOA$[M_4, \mathfrak{g}]$, is defined as the chiral algebra in the $\bar{Q}_+$-cohomology of the 2D $\cN=(0,2)$ theory $T[M_4, \mathfrak{g}]$, which is obtained by the twisted compactification of the 6d $(2,0)$ SCFT (of type $\mathfrak{g}$) on the four-manifold $M_4$ \cite{Gadde:2013sca}.

Let us consider a class of four-manifolds $M_4$ admitting a metric with $S^1$ isometry, such that the $S^1$ action is not free. Note that when the $S^1$ action \emph{is} free, the four-manifold is an $S^1$ bundle over some smooth three-manifold $M_3$ (which is a relatively tame class of four-manifolds), and it is conceptually clear how the dimensional reduction simplifies (first reduce on $S^1$ and then on $M_3$). Of course this is still an interesting and nontrivial problem, but our motivation comes from the opposite case, when the $S^1$ action is not free. Some examples of such four-manifolds include but are not limited to: (1) $\Sigma_g\times S^2$, where $\Sigma_g$ is an arbitrary genus-$g$ surface, and the $S^1$ action rotates $S^2$, which is equipped with an $S^1$-invariant ``sausage'' metric; (2) the unique nontrivial sphere bundle over the Riamann surface of genus $g$, denoted as $\Sigma_g\tilde{\times}S^2$; (3) $\cp^2$ with the standard Fubini-Study metric; (4) Hirzebruch surface, given by the connected sum $\cp^2 \# \bar{\cp}^2$, which is actually isomorphic to the nontrivial sphere bundle over a sphere, $S^2\tilde{\times} S^2$; (5) four-sphere $S^4$. In fact, the general class of such four-manifolds is quite well understood, at least in the simply connected compact case. It was shown by Fintushel \cite{Fin1,Fin2} that if a simply connected $M_4$ admits an $S^1$ action (not necessarily an isometry), it must be a connected sum of some number of $S^2\times S^2$, $S^4$, $\cp^2$ and $\bar{\cp}^2$. If $S^1$ is an isometry and the corresponding simply-connected four-manifold is, in addition, non-negatively curved, \cite{SeaYa} proved (see also \cite{HsiaKlei}) that it belongs to the list $\{S^4, \cp^2, S^2\times S^2, \cp^2\# \bar{\cp}^2, \cp^2\# \cp^2\}$. If we only allow codimension-2 fixed loci, then the list is even shorter: $\{S^2\times S^2, \cp^2\# \bar{\cp}^2\}$. Such lists may appear utterly specialized, however, these manifolds present certain interest to us in view of conjectures in \cite{Feigin:2018bkf}, which we aim to check in the future work.

We may consider the twisted compactification on a manifold with $S^1$ isometry in two steps: (1) first reduce the 6D theory along the $S^1$ orbits; (2) then perform reduction along the remaining quotient space $M_4/U(1)$.
The advantage of this procedure is that the first step yields a relatively simple and concrete result -- the maximal 5D SYM (MSYM) with gauge group $G$ (the simply-connected Lie group whose algebra is $\mathfrak{g}$), albeit placed on some curved space with boundaries and, possibly, defects.
For example, for $M_4=\Sigma_g \times S^2$, reducing along the parallels of $S^2$ gives the $\Sigma_g\times I$ geometry, where $I$ is an interval with the principal Nahm pole boundary conditions at both ends \cite{Chacaltana:2012zy}.
Further reduction of the 5D MSYM along $\Sigma_g$ with the topological twist simply gives a 3D $\cN=4$ SYM with $g$ adjoint hypermultiplets.
Thus we end up with the former 3D theory on the interval, with the $(0,4)$ Nahm pole boundary conditions at both ends. In this case, the resulting effective 2D theory in the IR has $(0,4)$ SUSY \cite{Putrov:2015jpa}. For $g=0$ it is expected to be a sigma-model into the framed monopole moduli space \cite{Hitchin:1982gh,Hitchin:1983ay,Donaldson:1984ugy,AtiyahHitchin+2014} (see also \cite{Haghighat:2011xx,Haghighat:2012bm}, where such models appear on magnetic strings in five dimensions, and \cite{Assel:2016lad} for similar 4d sigma models).
Other examples would lead to the interval reductions of various 3D $\cN=2$ gauge theories with matter and CS levels, which would flow to $\cN=(0,2)$ theories in the 2D limit.

We will explore various such examples in the future work \cite{DL3}.
However, it is natural to start with the most basic 3D $\cN=2$ gauge theory, that is the pure SYM, and study the interval VOA in this case.
This is one of the underlying motivations for the current paper.

\paragraph{The rest of this paper is structured as follows.}
In Section \ref{sec:basics} we review the necessary background material.
Then we move on to computing the interval compactification chiral algebra in the 3D $\mathcal{N}=2$ SYM theory with $\cN=(0,2)$ Dirichlet boundaries in Section \ref{sec:3d}.
In Section \ref{sec:2d} we compute the same chiral algebra from the 2D perspective and discuss some issues.
In the abelian case, we end up with three different presentations of the same VOA.
In the nonabelian case, we make general statements when possible, but mostly work with the $G=SU(2)$ example.
Then we finish with some open questions and speculations, and conclude in Section \ref{sec:concl}.

\section{Basics}
\label{sec:basics}
In this section, we set up the conventions and briefly review the background material, including the holomorphic-topological (HT) twist.
In particular, we discuss the 3D $ \mathcal{N}=2$ supersymmetric theory and its twisted content.
We will describe protected sectors, namely, the cohomology of a $ \overline{Q}_{+}$ supercharge in 3D and 2D theories. 
A $ \beta \gamma $ system will be briefly discussed as well.
Also the connection between the cohomology of $ \cN=\left( 0,2 \right)  $ theories and the \v{C}ech cohomology of the $\beta\gamma$ system is expounded upon, as it will be one of the important computational tools later.

\paragraph{Conventions:}
We consider $ \mathbb{R}^{ 2 } \times I $ with Euclidean signature and with coordinates $ x^{ \mu } $ on $  \mathbb{R}^{ 2 } $ and $ t\in[0,L] $ on $ I $.
We choose $ \gamma^{ i }_{\alpha \beta}$ matrices to be the Pauli matrices $ \sigma^{ i } $, and the antisymmetric symbol $\epsilon^{12}=\epsilon_{21}=1$ to lower and raise indices \cite{Intriligator:2013lca}.
\subsection{Basic Supersymmetry}

In Euclidean 3D space spinors lie in a 2-dimensional complex representation of $ SU\left( 2 \right)  $ and the $ \mathcal{N}=2 $ supersymmetry algebra takes the following form:
\begin{equation}
    \begin{split}
        \left\{ Q^{ I }, Q^{ J } \right\} = \delta^{ IJ } \gamma^{ \mu }P_{\mu}.
    \end{split}
\end{equation}
By defining a new combination of supercharges $ Q = Q^{ 1 } + i Q^{ 2 } $ and $ \overline{Q} = Q^{ 1 } - i Q^{ 2 }$, one can obtain the following conventional form of the superalgebra:
\begin{equation}
    \begin{split}
        \left\{ Q,\overline{Q} \right\}  = 2 \gamma^{ \mu } P_{\mu}.
    \end{split}
\end{equation}
Note that the supercharges are not conjugate to each other in Euclidean signature contrary to Minkowski space, where minimal representations are real (Majorana).
This algebra admits a $ U\left( 1 \right) _{R} $-charge, which is an automorphism of this algebra and acts by rotating $ Q $-charges.
Operators also have  a charge $ J_0 $ with respect to Spin(2)$_{E}  $ rotation  parallel to boundaries.
Let us also define the combination $ J \coloneqq \frac{R}{2} - J_0 $, then all charges can be summarized by the following table:
\begin{center}
$ \begin{array}{c|cccc|cc} 
& \bar{Q}_{+} & Q_{+} & \bar{Q}_{-} & Q_{-} & \mathrm{d} z & \mathrm{~d} \bar{z} \\
\hline U(1)_{R} & 1 & -1 & 1 & -1 & 0 & 0 \\
\operatorname{Spin}(2)_{E} & \frac{1}{2} & \frac{1}{2} & -\frac{1}{2} & -\frac{1}{2} & 1 & -1 \\
U(1)_{J} & 0 & -1 & 1 & 0 & -1 & 1
\end{array} $  
\end{center}

In what follows, we will be considering the cohomology of $ Q \defequal \overline{Q}_{+}$.
The $ P_{z} $ is the only $ P_{\mu} $ which is not $ Q $-exact.
It makes our algebra into an algebra with only holomorphic dependence on the coordinates.
We would consider boundary conditions that preserve a  $ \left( 0,2 \right)  $-part of the supersymmetry algebra generated by $ Q_{+} $ and $ \overline{Q}_{+} $. 
We also want to leave $ U\left( 1 \right) _{R} $ unbroken in the bulk and on the boundary.

The only relevant 3D $\cN=2$ multiplet for this paper is a vector multiplet for some gauge group $ G $:
$$
V_{\rm 3D}=\theta \sigma^m \bar{\theta} A_m+i \theta \bar{\theta} \sigma-i \theta^2 \bar{\theta} \bar{\lambda}-i \bar{\theta}^2 \theta \lambda+\frac{1}{2} \theta^2 \bar{\theta}^2 D_{3 d} \quad \text { (WZ gauge) },
$$
where all the fields lie in the Lie algebra $\mathfrak{g} = \operatorname{Lie}(G) $.
It consists of a connection $A_m$, a real scalar $\sigma$, a complex fermion $\lambda_\alpha$, and a real auxiliary field $D_{3 d}$.
We can also define a covariant superfield 
$$
\Sigma_{\rm 3D}=-\frac{i}{2} \epsilon^{\alpha \beta} \bar{D}_\alpha D_\beta V_{3 d}=\sigma-\theta \bar{\lambda}+
\bar{\theta} \lambda+\bar{\theta} \gamma^{ \mu } \theta \epsilon_{\mu \nu \rho} F^{\nu \rho}+i \theta \bar{\theta} D +
\ldots
$$
that satisfies $D_\alpha D^\alpha \Sigma_{\rm 3D}=\bar{D}^\alpha \bar{D}_\alpha \Sigma_{\rm 3D}=0$.

\subsection{Holomorphic-topological twist}
In this section, we review some formulas of the HT-twisted formalism \cite{Aganagic:2017tvx,costello2020boundary}.
The $ Q $-cohomology of operators of the twisted theory and physical theory are the same.
The convenience of this formalism is that some calculations have only a finite number of Feynman diagrams.

The twisted formalism is reviewed nicely in \cite[Section 3.2]{costello2020boundary}. Let us consider a dg-algebra:
    \begin{equation}
        \begin{split}
            \mathbf{\Omega}^{ \bullet } = C^{ \infty }\left( \mathbb{R}^3 \right) \left[ \dd t,\dd\overline{z} \right] ,
        \end{split}
    \end{equation}
    where the multiplication is the multiplication of differential forms.
    One also needs to consider forms with values in the $k$-th power of the canonical line bundle in the $z$-direction:
    \begin{equation}
        \begin{split}
            \Omega^{ \bullet,k } = \Omega^{ \bullet } \otimes K^k  = C^{ \infty }\left( \mathbb{R}^3 \right) \left[ \dd t,\dd\overline{z} \right] \dd z^k.
        \end{split}
    \end{equation}
    
There is cohomological charge $ R $, which is related to the original $R$-charge in the physical theory by adding a ghost charge to it, and a  twisted spin charge $J$.
In the holomorphic-topological twisted 3D $\mathcal{N }=2 $ theory the fields can be organized into the following BV superfields:
\begin{equation}
    \begin{split}
        \mathbf{A} = c + \left( \mathcal{A}_{t}\dd{t} + A_{\overline{z}}\dd{\overline{z}}\right) + B^{ * }_{\overline{z}t}\dd{\overline{z}}\dd{t} \in \Omega^{ \bullet } \otimes \mathfrak{g}\left[ 1 \right],  \\
        \mathbf{B}=\left( B+\cA^{ * }_{\mu}\dd{x^\mu}+ c^{ * }\dd{\overline{z}}\dd{t} \right) \dd{z} \in \Omega^{ \bullet,1 }\otimes  \mathfrak{g}^{ * },
    \end{split}
\end{equation}
where the superfields $ \mathbf{A}  $ and $ \mathbf{B} $ are obtained from the vector multiplet, and we also introduced ghosts. For example, the field
$$
\mathcal{A}:=A_{\bar{z}} \dd{\bar{z}}+\mathcal{A}_t \dd{t}
$$ is just a connection with complexified $\mathcal{A}_t=A_t-i \sigma$ and ordinary $A_{\bar z}$. The field
$B\equiv B_z$ is identified with $\frac{1}{g^2} \overline{\mathcal{F}_{\bar{z} t}}=\frac{1}{g^2} F_{z t}+\ldots$ on shell.
The bracket $[1]$ indicates a shift of cohomological degree by one.
The forms $\dd{t}$ and $\dd{\overline{z}}$ are treated as Grassmann variables, so they anticommute with fermionic fields, and superfields can be regarded as either bosonic or fermionic.

The action of the $ Q $-charge in the twisted formalism can be written as follows:
\begin{equation}
    \label{eq:q_trans} 
        \begin{array}{ll}
        Q \mathbf{A}=F(\mathbf{A}), & Q \mathbf{B}=\mathrm{d}_{\mathbf{A}} \mathbf{B} - \frac{k}{2 \pi} \partial \mathbf{A}. \\
    \end{array}
\end{equation}
Here the differentials are defined as follows:
\begin{equation}
    \dd_{\mathbf{A}} = \dd' - i\mathbf{A},\qquad \dd'=\dd{t} \partial_t + \dd{\bar{z}} \partial_{\bar{z}},\qquad \partial = \dd{z} \partial_z
\end{equation}
and the curvature is
\begin{equation}
    F(\mathbf{A}) = i \dd_{\mathbf{A}}^2 = \dd'\mathbf{A} - i\mathbf{A}^2.
\end{equation}
It will be also useful to keep in mind the following tables of the $ R $ and $ J $ charges of the operators:

\begin{table}[!h]
\begin{center}
\begin{tabular}{c|c|c|c|c}
    & $ c $ &  $\mathcal{A}_{t}$ & ${A}_{  \overline{z}}$ \\
    \hline
     $ R $  & 1 & 0 & 0 \\
    $ J $ & 0 & 0 & -1 \\
\end{tabular}
\qquad
\begin{tabular}{c|c|c|c|c}
     & $ B $&$ \mathcal{A}_{t}^{ \star } $ & $ \mathcal{A}_{\overline{z}}^{ \star } $ & $ c^{ \star } $  \\
     \hline
    $ R $ & 0 & -1 & -1 & -2 \\
    $ J $ & 1 & 1 & 0 & 0 \\
\end{tabular}
\caption{\label{t:charges} The charges of the fields.}
\end{center}
\end{table}
The following charges are assigned to the differential forms:
\begin{table}[h!]
\begin{center}
\begin{tabular}{c|c|c|c}
     & $ \dd{t} $ & $ \dd{\overline{z}} $ & $ \dd{z} $\\
     \hline
    $ R $ & 1 & 1 & 0 \\
    $ J$  & 0 & 1 & -1 \\
\end{tabular}   
\end{center}
\caption{\label{t:chargesRJ} The charges of the differential forms.}
\end{table}
\subsection{\texorpdfstring{$\beta\gamma$}{beta-gamma} System}
In this section we review the $ \beta \gamma $ system, or as it is usually called in mathematical literature, a sheaf of chiral differential operators.
All formulas can be found in  \cite{gorbounov1999gerbes,gorbounov2001chiral,nekrasov2005lectures,witten2007}

Classically, consider a complex manifold $ M $, a map $\gamma: \Sigma\to M$, and a $(1,0)$-form $\beta$ on $\Sigma$ with values in the pullback $\gamma^*(T^* M)$, governed by the following action:
\begin{equation}
    \begin{split}
       \int_{\Sigma} \beta_{i} \overline{\partial }\gamma^{ i },  
    \end{split}
\end{equation}
where $ \gamma^i $ and $ \beta _{i} $ are the holomorphic components of $\gamma$ and $\beta$, respectively.

Quantum mechanically, the situation is more interesting as we want to preserve the OPE's locally.
On each patch we have the usual $ \beta\gamma $ system with the OPE:
    \begin{equation}
        \begin{split}
            \gamma^{ i } \left(z\right)\beta_{j} \left(w\right) \sim \delta ^{ i }_{j}\frac{dw}{z-w},
        \end{split}
    \end{equation}
which in the physics notation yields:
\begin{equation}
    \begin{split}
        \left[ \gamma^{ i }_{n}, \beta_{ j \, k} \right]  = \delta ^{ i }_{j} \delta _{n+ k,0}.
    \end{split}
\end{equation}
The normal ordering prescription for polynomials is defined by the point-splitting procedure and depends on a chosen complex structure.   

To get a global theory, we need to learn how to glue fields on different patches together.
First, let us choose two sets of local coordinates $ \gamma^{ i } $ and $ \widetilde{\gamma}^{ b } $ on some open set.
The gluing is done by a local automorphisms and $ \gamma $ is transformed as in the classical theory.
As we mentioned before, $ \beta $ is transformed classically as $ \beta \mapsto \widetilde{\beta} = f^{ * } \beta $, where $ f  $ is a local holomorphic diffeomorphism.
The quantum version  of this transformation law 
is given by the following general formula \cite{nekrasov2005lectures}:
\begin{equation}
    \label{eq:quan_beta}
\tilde \beta_a=\beta_i \frac{\partial \gamma^i}{\partial \tilde \gamma^a}
 -\underbrace{\frac{1}{2}\left(\partial_j g^{i}_a \partial_i g^{j}_b\right)\frac{\partial \tilde\gamma^b}{\partial  \gamma^k} \partial \gamma^k    }_{\mathclap{} \text{quantum part}} +
\underbrace{\frac{1}{2}\mu _{ ab }\partial \widetilde{\gamma}^{ b }}_{\mathclap{}\text{moduli parameter}},
\end{equation}
where the Jacobian of the transformation is  $ g^{ i }_{a} =  \dfrac{\partial \gamma^{ i }}{\partial \widetilde{\gamma}^{ a }} $.
The ``quantum part'' appears because we want to keep the right OPE on both patches after gluing.
There is an intrinsic ambiguity associated to solving for the OPE equations.
Moreover, the moduli space of the $ \beta \gamma $ system is parametrized by $ \mu $ or, stating it simply,  different ways of gluing our system globally are in one to one correspondence with the possible choices of $ \mu $.
The parameter $ \mu $ takes values in the first \v{C}ech cohomology group with coefficients in the sheaf of closed holomorphic two-forms, i.e. $ \mathcal{H} ^{ 1 }\left( \Omega^{ 2,cl },M\right)  $.

This algebra becomes VOA if $ c_1(M)=0 $. 
The global stress energy tensor is
\begin{equation}
    \begin{split}
        T = - \beta^{ i } \partial  \gamma_{i} - \frac{1}{2}(\log w)'',
    \end{split}
\end{equation}
where $ w $ is the coefficient of the holomorphic top form $ \omega = w d \gamma_1\wedge \ldots\wedge d \gamma_{n} $. 

\subsection{(0,2) Cohomology And \texorpdfstring{$\beta \gamma$}{Beta-Gamma} System}\label{sec:02bgREV}

One of the physics applications of the curved $ \beta \gamma $ system is in the realm of $(0,2)$ theories. 
As discussed in \cite{witten2007} and will be reviewed shortly, the $ \beta \gamma $ system describes the perturbative cohomology of half-twisted $ (0,2) $ theories.

Let us first discuss a general $ (0,2) $ sigma model.
The Lagrangian is constructed locally by introducing a $(1,0)$-form $K=K_{i} \dd{\phi^{i}}$, with complex conjugate $\bar{K}=\bar{K}_{\bar{i}} \dd{\bar{\phi}^{\bar{i}}}$, and setting
$$
I=\int\left|\dd^{2} z\right| \dd{\bar{\theta}^{+}} \dd{\theta^{+}}\left(-\frac{i}{2} K_{i}(\Phi, \bar{\Phi}) \partial_{z} \Phi^{i}+\frac{i}{2} \bar{K}_{\bar{i}}(\Phi, \bar{\Phi}) \partial_{z} \bar{\Phi}^{\bar{i}}\right),
$$
where $\Phi^i$ is a chiral superfield whose bottom component $ \phi^{ i } $ defines a map from a Riemann surface $ \Sigma $ to a target complex manifold $ X $.
The cohomology of the supercharge $ \overline{Q}_{+} $ can be deformed by $ \mathcal{H} = 2i\partial \omega  \in H^{ 1 }\left( M, \Omega^{ 2,cl} \right) $, where $ \omega = \frac{i}{2} \left( \overline{\partial }K - \partial \overline{K} \right)   $. 
Not only that but $ \mathcal{H} $ must be of type $ (2,1) $ to preserve  $ \left( 0,2 \right)  $ supersymmetry.
Note that it is the same class that parametrizes the $ \beta \gamma $ system moduli.

If we set $\alpha^{\bar{i}}=-\sqrt{2} \bar{\psi}_{+}^{\bar{i}}, \rho^i=-i \psi_{+}^i / \sqrt{2}$ and twist the theory then
$ \rho $ is an element of $ \Omega^{ 0,1 }\left( \Sigma \right) \otimes \phi^*\left( TX \right)  $  and $ \alpha $ is from $ \phi^*\left( \overline{TX} \right) $.
Both $ \alpha $ and $ \rho  $ are Grassmann variables.
After the twisting, $ \overline{Q}_{+}$ becomes a worldsheet scalar with the following action on the fields:
\begin{equation}
    \begin{split}
        Q \phi^{ i } = 0,\qquad & Q \overline{\phi}^{ i } = \alpha ^{ i },\\
        Q \rho^{ i }_{\overline{z}} = - \partial _{\overline{z}}\phi^{ i },\qquad &Q \alpha ^{ \overline{i} } = 0,
    \end{split}
\end{equation}
and the action is given by:
\begin{equation}
    \begin{split}
        I = \int \dd^2 z\left( g_{i\overline{j}}\partial _{\overline{z}}\phi^{ i }\partial _{z}\overline{\phi}^{ \overline{j} } + g_{i\overline{j}}\rho ^{ i }_{\overline{z}}\partial _{z}\alpha ^{ \overline{j} } - g_{i \overline{j},\overline{k}} \alpha^{ \overline{k} }\rho^{ i }_{\overline{z}}\partial _{z}\overline{\phi}^{ \overline{j} } \right) + S_T,
    \end{split}
\end{equation}
where $ S_T = - \int \dd^2 z \left( T_{ij}\partial _{\overline{z}}\phi^{ i }\partial _{z}\phi^{ j } - T_{ij,\overline{k}}\alpha ^{ \overline{k} }\rho^{ i }\partial _{z} \phi^{ j } \right) $ and $ \mathcal{H}= \dd T $ should be of the type described above.
We also note that $ T $ is not a 2-form but a 2-gauge field.

Locally, the structure of the $ Q $-cohomology can be understood easily with the help of the $ \beta \gamma $ system.
Consider an open ball $ U_{\alpha} $:
	\begin{equation}
I=\frac{1}{2 \pi} \int_{U_{\alpha}}\left|\dd^{2} z\right| \sum_{i, \bar{j}} \delta_{i, \bar{j}}\left(\partial_{\bar{z}} \phi^{i} \partial_{z} \bar{\phi}^{\bar{j}}+\rho^{i} \partial_{z} \alpha^{\bar{j}}\right).
\end{equation}
All the sections in the cohomology can be written as (for details refer to \cite{witten2007}):
\begin{equation}
F\left(\phi, \partial_{z} \phi, \ldots ; \partial_{z} \bar{\phi}, \partial_{z}^{2} \bar{\phi}, \ldots\right) \in H^{ 0 }\left( {\rm Ops}^{\rm 2d },\overline{Q}^{\rm 2d }_{+} \right).
\end{equation}
If we set $\beta_{i}=\delta_{i \bar{j}} \partial_{z} \bar{\phi}^{\bar{j}}$, which is an operator of dimension $(1,0)$, and $\gamma^{i}=\phi^{i}$ of dimension $(0,0)$, the bosonic part of the  action can be rewritten as: 
\begin{equation}
    I_{U_{\alpha}}^{\beta \gamma}=\frac{1}{2\pi}\int\left|\dd^{2} z\right| \sum_{i} \beta_{i} \partial_{\bar{z}} \gamma^{i}
\end{equation}
and the space of all sections of this theory is
\begin{equation}
F\left(\gamma, \partial_{z} \gamma, \partial_{z}^{2} \gamma, \ldots ; \beta, \partial_{z} \beta, \partial_{z}^{2} \beta \ldots\right). 
\end{equation}
So, locally, the space of sections of the $\beta\gamma$ system and the $Q$-cohomology of the $(0,2)$ sigma model coincide.
Globally, things are a little more complicated and we are required to consider \v{C}ech cohomology to find the operators with all possible $ R $-charges. 
The $ R $-charge in the sigma model description is matched with the cohomological degree:
	\begin{equation}
	\begin{split}
		H^{ \bullet }\left({\rm Ops}^{\rm 2d }, \overline{Q}^{\rm 2d }_{+} \right) 
\cong  H_{\rm \check{C}ech}^{ \bullet }(X,\hat{A}),
	\end{split}
\end{equation}
where $ \hat{A} $ is a sheaf of free $ \beta\gamma $ systems.

\section{3D Perspective}
\label{sec:3d}

In this section, we discuss the $ Q $-cohomology from the 3D $ \mathcal{N}=2 $ point of view. There are a few constructions one could consider. Firstly, the $Q$-cohomology of local operators in the \emph{bulk} is a commutative vertex algebra (VA) $\cV$ intrinsic to the theory \cite{costello2020boundary,Oh:2019mcg}.
Secondly, the $Q$-cohomology of local operators at the \emph{boundary} preserving $(0,2)$ SUSY (explored in the same reference) is, generally speaking, a noncommutative VA.
Thirdly, --- and this is the new structure that we study here, --- one can define the $Q$-cohomology on the \emph{interval}, or the chiral algebra of the interval compactification.
If both the 3D theory and its $(0,2)$ boundary conditions preserve the R-symmetry, this is a vertex operator algebra (VOA), as opposed to just VA, i.e., it necessarily contains the stress energy tensor.
This is obvious since in the IR limit, the theory becomes effectively two-dimensional \cite{DL1}, and the chiral algebra of an R-symmetric 2D $\cN=(0,2)$ theory always has the Virasoro element, as can be seen from the general R-multiplet structure \cite{DedLGChiral}.
In fact, one can also prove this by constructing the $(0,2)$ R-multiplet from the integrated currents directly in 3D \cite{Brunner:2019qyf}.

Intuitively, the interval VOA contains all 3D observables that look like local operators in the 2D limit.
These includes 3D local operators and lines, thus effectively enhancing the $Q$-cohomology of local operators by the line operators stretched between the boundaries.
The line operators can additionally be decorated by local operators in the $Q$-cohomology.
We are allowed to move them to the boundaries, as follows from the properties of $Q$ \cite{costello2020boundary}.
Additionally, the two ends of the line operator can support some other boundary operators that are stuck there and cannot be shifted into the bulk.
Thus the most general configuration in the $Q$-cohomology consists of a line stretched between the boundaries with some local operators sitting at its two endpoints.
This includes the possibility of colliding a boundary operator from the boundary VA mentioned earlier with the endpoint of a line.
On the half-space, the latter implies that lines ending at the boundary engineer modules for the boundary VA \cite{Costello:2018swh}.
In our case, i.e. on the interval, this similarly means that the line operators give bi-modules of the pair of boundary VAs supported at the two ends of the interval.

Examples of line operators that appear here include descendants of the $ Q $-closed local operators integrated over the interval \cite{costello2020boundary}.
Things like Wilson and vortex lines or their generalizations \cite{Dimofte_2020} may appear as well (the Wilson line can be also viewed as a descendent of the ghost field).
We are striving to compute the OPE involving such operators.
In fact, we will compute the exact chiral algebra in the abelian case and the perturbative one in the nonabelian case, that is the OPE of both local and line operators, for gauge theories with the Dirichlet boundary conditions preserving $ \left( 0,2 \right) $ supersymetry.
We will find that the order line operators, i.e. Wilson lines, create representations for the boundary operator algebras.
They are naturally included into the perturbative interval VOA.
The disorder or vortex lines (when allowed), on the other hand, together with the boundary monopoles should be viewed as manifestation of the non-perturbative phenomena.
We claim to fully understand them in the abelian case but only briefly discuss in the nonabelian setting.

Generally speaking, we have chiral algebras on the left and right boundaries denoted by $ \mathcal{V}_{\ell} $ and $ \mathcal{V}_{r} $, respectively. 
There is also the bulk algebra (commutative VA) $  \mathcal{V} $, which includes only local operators. 
Moreover, $\cV$ maps naturally into the left and right algebras via the bulk-boundary maps, allowing to define their tensor product over $\cV$.
There are two maps, which are defined by pushing the local operators from $ \mathcal{V}  $ to the two boundaries:
\begin{equation}
   \begin{split}
       \rho_{\ell, r}:  \mathcal{V} \to  \mathcal{V}_{\ell,r}.
   \end{split}
\end{equation}
Let us first define the algebra that only includes the local operators in 3D:
\begin{equation}
  \begin{split}
      \mathcal{V}_{\ell} \otimes_{\mathcal{V}} \mathcal{V}_{r}.
  \end{split}
\end{equation}
The tensor product over $\cV$ involves the identification of operators that can be obtained from the same operator in the bulk.
The next step is to extend this algebra by modules that correspond to the $ Q $-closed line operators stretched between the boundaries.
We will denote the resulting 3D cohomology as $ H^{ \bullet }({\rm Ops}^{\rm 3d}, Q )  $.

Last but not least, let us note explicitly that in the non-abelian case, we will be mostly discussing the CS level $k>h^\vee$.
The IR physics of a 3D $\cN=2$ YM-CS is known for all values of $k$ \cite{Bashmakov:2018wts}, and for $0<|k|<h^\vee$ it exhibits spontaneous SUSY breaking \cite{Bergman_1999,Ohta:1999iv}, and runaway for $k=0$ \cite{Affleck:1982as}.
What happens on the interval in the range $0\leq|k|< h^\vee$ will be addressed elsewhere, while the $k\geq h^\vee$ case is more straightforward.
Yet, it is interesting enough, as we see in this work.

\subsection{Vector Multiplet}
  
Consider a vector multiplet sandwiched between the Dirichlet boundary conditions. In the twisted formalism this amounts to choosing $ \mathbf{A}\big| = 0 $\cite{costello2020boundary} at both ends.
This, in turn, is equivalent to setting $ \mathrm{c}| = 0 $ and $ \mathcal{A}_{\overline{z}}| = 0 $

The transformation rules for $\mathrm{c}$, $\mathcal{A}$, $B$, and $\cA^*$ in the bulk follow from (\ref{eq:q_trans}):
\begin{equation}
  \label{eq:q_trans_compon}
    \begin{array}{ll}
      Q\, \mathrm{c} \, = - i \mathrm{c}^2,  &Q\, \mathcal{A}\;  = \dd_{\mathcal{A}}\mathrm{c},\\
      Q B=-i[\mathrm{c}, B]-\frac{k}{2 \pi} \partial_{z} \mathrm{c},  &Q\mathcal{A}^* = \dd'B -i \left[ \mathcal{A},B \right] - \frac{k}{2\pi} \partial _{z}\mathcal{A}.
      \end{array}
\end{equation}
where $\dd_\cA=\dd' - i\cA$.

Before diving deeper, we review what is known about the perturbative algebra on the boundary. Recall that the action in the HT twist takes the following form:
\begin{equation}
  \begin{split}
    \int \mathbf{B}F\left( \mathbf{A} \right) + \frac{k}{4\pi} \int \mathbf{A}\partial \mathbf{A}.
  \end{split}
  \label{eq:actionBA} 
\end{equation}
In the twisted formalism, the propagator connects $  \mathbf{A}$ with $ \mathbf{B} $, as follows from the kinetic energy $\mathbf{B}\,\dd'{\mathbf{A}}$. The rest of terms, including the Chern-Simons, are treated as interactions, which induces the bivalent and the trivalent vertices:
\begin{enumerate}
  \item the vertex connecting two $ \mathbf{A} $,
  \item the vertex connecting two $ \mathbf{A} $ and one $ \mathbf{B} $.
\end{enumerate} 
This form of Feynmann rules is very restrictive, and there can only be a finite number of diagrams for a given number of external legs \cite{Gwilliam2019AOE,costello2020boundary}.

For the group $ G $ the  field $ B $ lies in $ \mathfrak{g}^* $. 
The gauge group is broken on the boundary and becomes a global symmetry there. 
There is also a non-trivial boundary anomaly due to the bulk Chern-Simons term and the fermions in the gauge multiplet.
The former contributes $ \pm k $ to the anomaly and the latter contributes $ -h^{ \vee } $.

Thus, we expect to get two boundary affine algebras, one for each boundary global symmetry, with levels dictated by the anomaly.
From (\ref{eq:q_trans_compon}), on the boundary we have $ QB = 0 $.
So, $ B $ is in the cohomology and its OPE with itself was obtained in \cite[section 7.1]{costello2020boundary}: 
\begin{equation}
  \begin{split}
    B_{a} \left(z\right)B_{b} \left(w\right) \sim \frac{\left( -h^{ \vee } \pm k \right) \kappa_{ab}}{(z-w)^{2 }} + \frac{i f_{ab}{}^{ c }}{(z-w)^{ }}B_{c}\left( w \right),
  \end{split}
\end{equation}
where $B_a$ are the components of $B$ in some basis, $\kappa_{ab} $ is the standard bilinear form equal to $\frac1{2h^\vee}$ times the Killing form in that basis, and $ h^{ \vee} $ is the dual Coxeter number.
This expression can be obtained from the charge conservation and anomalies alone.
The $ J $ charge of $ B $ is 1 (see Table \ref{t:charges}). Thus, only $ z $ up to the second power can contribute.
The first term is the anomaly term explained above.

All half-BPS line operators hitting the boundaries are expected to create modules for the boundary algebras, and we will show that it is true perturbatively for the Wilson line momentarily.
A Wilson line can be written as $ \mathcal{P} e^{ \int_{t} A } $.
Observe that it is indeed $Q$-invariant in the usual formalism, or in the twisted formalism by invoking $ Q\left( \mathcal{A} \right)  = \dd_{\mathcal{A}}\mathrm{c}$ and $\mathrm{c}| = 0$.
The kinetic term for $B$ and $\mathcal{A}$ can be written as: 
\begin{equation}
  \begin{split}
    \Tr B\left( \partial _{\overline{z}} \mathcal{A}_{t} - \partial _{t} \mathcal{A}_{\overline{z}} \right).
  \end{split}
\end{equation}
There is a gauge symmetry associated to this term:
\begin{equation}
  \begin{split}
    \cA_{t} &\to \cA_{t} + \partial _{t} \eta, \\
    \cA_{\overline{z}} &\to  \cA_{\overline{z}} + \partial _{\overline{z}}\eta.
  \end{split}
\end{equation}
The propagator for $ B $ and $ \mathcal{ A}_{t} $ with the appropriate gauge fixing \cite{costello2020boundary} is just
\begin{equation}
  \begin{split}
    G(z,\overline{z};t) \propto \frac{\overline{z}}{|x^2|^{ \frac{3}{2} }},\quad \text{where} \quad x^2 = z\bar{z} + t^2,
  \end{split}
\end{equation}
where we do not keep track of a proportionality constant.

Next, we can calculate the OPE of the boundary operator $B(0)$ with the Wilson line segment $W^{(\lambda)}(z)$ in the irreducible representation of $\mathfrak{g}$ labeled by the dominant weight $\lambda$.
Expanding the Wilson line in a Taylor series, one finds that there is only one diagram that can possibly contribute, where a single $A$ from the Wilson line is directly connected to the operator $B$ at the boundary, see Fig. \ref{fig:WaB}.
\begin{figure}[ht]
    \centering
    \def\svgwidth{0.5\columnwidth}
    \import{./figures/}{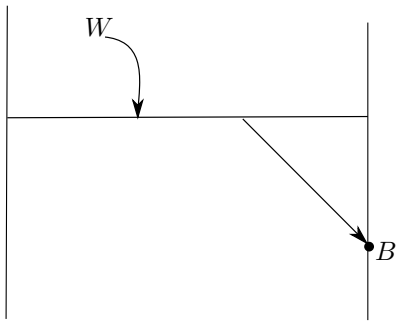}

    \caption{The diagram connecting single $\mathcal{A}$ in the Wilson line to the operator $B$ at the boundary.}
    \label{fig:WaB}
\end{figure}
This is proven simply by looking at the two vertices mentioned earlier and realizing that they cannot contribute. 
The diagram evaluates to
\begin{equation}
  \begin{split}
    \int 
    \wick{
        \c1 \cA_{t}^{ a }(t,z,\overline{z}) \c1 B_{ b} 
        \left( 0 \right)\dd{t}
        \propto \delta^a_b\int_0^L \frac{\overline{z}\dd{t}}{\left(t^2+z\overline{z}\right)^{ \frac{3}{2} }} =
        \delta^a_b\frac{L}{z \sqrt{L^2+z \overline{z}} } 
        = \delta^a_b\frac{1}{z} + \ldots,
    }
  \end{split}
\end{equation}
where we keep only the singular term in the $z\to0$ expansion on the right.
This computation already includes corrections to the propagator in the presence of boundaries.
Indeed, such corrections can be accounted for using the method of images.
Since we are interested in the singular term in the OPE, it is enough to only include the first image $\cA_t(-t,z,\bar{z})=\cA_t(t,z,\bar{z})$, as the other ones never get close to $B(0)$.
This simply doubles the contribution of the original insertion $\cA_t(t,z,\bar{z})$.
Combining everything together, this computation shows that the OPE with the Wilson line is
\begin{equation}
  \begin{split}
    B_a\left( z \right) W^{(\lambda)}\left( w \right) \sim \frac{T_a W^{(\lambda)}\left( w \right) }{z-w},
  \end{split}
\end{equation}
where $T_a$ denotes the Lie algebra generator in the same representation $\lambda$ as the Wilson line, and $T_a W^{(\lambda)}(w)$ means the matrix product.
Looking at the Table \ref{t:charges}, we immediately see that this is the only possible OPE as $W^{(\lambda)}$ is not charged. 

More precisely, there are two copies of the affine generators on the interval, denoted $B^\ell_a$ and $B^r_a$ for the left and right boundaries, respectively.
Assuming that the Wilson line performs parallel transport from the right to the left, we find the following OPE's on the interval:
\begin{equation}
  \begin{split}
    B_a^\ell\left( z \right) W^{(\lambda)}\left( w \right) \sim \frac{T_a W^{(\lambda)}\left( w \right) }{z-w},\\
    B_a^r\left( z \right) W^{(\lambda)}\left( w \right) \sim \frac{W^{(\lambda)}\left( w \right) T_a }{z-w}.
  \end{split}
\end{equation}
For completeness, note that the OPE of Wilson line's matrix elements is regular:
\begin{equation}
    W^{(\lambda)}_{ij}(z)W^{(\mu)}_{kl}(w)\sim 0,
\end{equation}
simply because no Feynmann diagram can connect two $\mathcal{A}_t$'s.

Let us pause and contemplate on what we have obtained so far. 
We found that $ B^{\ell,r} $ and $ W^{(\lambda)} $ are elements of the extended cohomology, and we claim that they generate the \emph{perturbative} chiral algebra.
The boundary $ B $'s satisfy the OPE relations of the affine Kac-Moody vertex algebras $ V_{-h^\vee \pm k}(\mathfrak{g}) $.
They also act on $ W^{(\lambda)} $ as on a primary field of the highest weight representation of the affine algebra (i.e., a Weyl module $ V_{\lambda,-h^\vee \pm k}= {\rm Ind}^{ \hat{g} }_{\hat{b}}V_{\lambda} $, where $ V_{\lambda}  $ is a finite-dimensional module for the underlying Lie algebra $\mathfrak{g}$).
Thus, we obtain the following result for the perturbative chiral algebra:
\begin{equation}
  \begin{split}
  \label{eq:PertAlg}
    \cC_k[G_\C] := \bigoplus_{\lambda \in  P_+} V_{\lambda,-h^{ \vee }+k} \otimes V_{\lambda^*, -h^{ \vee }-k},
  \end{split}
\end{equation}
where $ \lambda $ runs over the set $P_+$ of dominant weights, and $\lambda^*=-w(\lambda)$, where $w$ is the longest element of the Weyl group of $G$.
It is not a coincidence that $\cC_k[G_\C]$ looks like $\cD_k[G_\C]$ from the equation \eqref{eq:CDObg} in the Introduction.
This object is well known in the mathematical literature \cite{ArkGai,FRENKEL200657,ZHU20081513,Creutzig:2017uxh,CREUTZIG2022108174,Moriwaki:2021epl},
and away from the rational values of $k$, $\cC_k[G_\C]$ is a simple VOA isomorphic to the VOA $\cD_k[G_\C]$ of chiral differential operators on $G_\C$, with the deformation parameter (perturbative B-field flux) $k\in \C = H^3(G,\C)$. 
At the rational points $k\in\Q$, we can encounter singular vectors, and life is getting much more interesting, e.g., $\cC_k[G_\C]$ and $\cD_k[G_\C]$ are no longer the same \cite{ZHU20081513}.

We can also ask what happens to the stress-energy tensor in our setup. 
We know that it does not exist as a local operator in the bulk chiral algebra \cite[section 2.2]{costello2020boundary}, and generally, the boundary VA does not have to possess a stress-energy tensor as well.
At the same time, we have the current that generates holomorphic translations. 
It acts on the boundary operators as: 
\begin{equation}
  \begin{split}
    \partial_{w} O\left(w \right)  = \int_{HS^{ 2 }} *(T_{z \mu}\dd{x}^{ \mu }) O(w).  
  \end{split}
\end{equation}
There is no boundary part in this expression as we do not introduce any non-trivial degrees of freedom at the boundary.
We can create a line stress-energy operator by stretching the integration surface $HS^2$ to a cylinder in a way that is shown in Fig. \ref{fig:pull-off-the-boundary}.
\begin{figure}[!h]
    \centering
    \def\svgwidth{1\columnwidth}
    \import{./figures/}{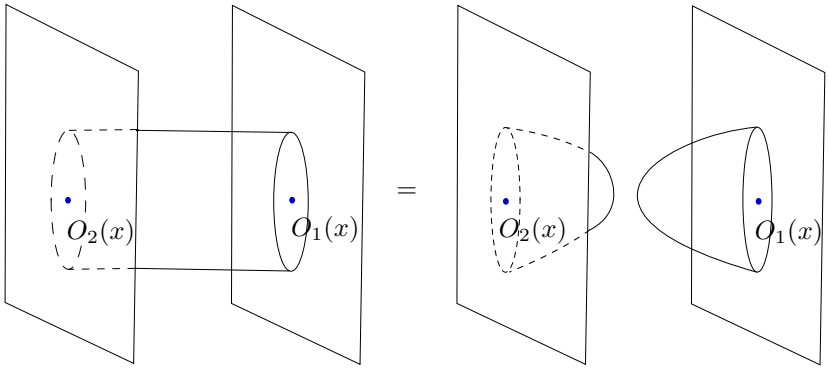}

    \caption{Possible codimension one surfaces over which $T_{z\nu}$ is integrated to generate holomorphic translations along the boundary.}
    \label{fig:pull-off-the-boundary}
\end{figure}
This allows to act with the holomorphic translations also on line operators by enclosing them with such a cylinder.
The integration over this tube can be separated into two parts.
The integral over $\dd{t}$ defines the integrated stress-energy tensor,
\begin{equation}
    T_{zz}^{\rm int} = \int_0^L \dd{t} T_{zz},\qquad T_{z\bar{z}}^{\rm int} = \int_0^L \dd{t} T_{z\bar{z}},
\end{equation}
which behaves as a 2D stress tensor generating the holomorphic translations.
The remaining integration over the contour in the boundary plane is reminiscent of the 2D CFT setup.
In fact, it was shown in \cite{Brunner:2019qyf} that such integrated 3D currents (the stress tensor, the R-current and the supercurrents) fit precisely in the 2D $\cN=(0,2)$ R-multiplet.
The presence of this multiplet automatically implies existence of the stress-energy tensor in the cohomology \cite{DedLGChiral}.
In the IR regime, as $ t $ collapses, $T^{\rm int}$ of course becomes the 2D stress tensor, and the 2D $\cN=(0,2)$ arguments are applicable directly.
In either case, we see again that the interval chiral algebra has the stress-energy tensor that follows from the integrated currents.

Outside of critical levels, the boundary algebras are VOAs and have well-defined Sugawara stress tensors.
Physically, we expect that the interval stress-energy tensor becomes the sum of $ T^{ \text{sug} }_\ell $ and $ T^{ \text{sug} }_r $ as an element in the 2D chiral algebra.
It is clear that they act in the same way on the boundary operators.
It indeed turns out to be true as we will argue in the next section using the 2D perspective.

\subsection{The non-perturbative corrections}\label{sec:nonpert}
What are the possible non-perturbative corrections to the above? 
One comes from the boundary monopole operators discussed in \cite{Bullimore:2016nji,Dimofte:2017tpi,costello2020boundary}.
Another possibility is the vortex line connecting the two boundaries.
General gauge vortices discussed in \cite{Kapustin:2012iw,Drukker:2012sr,Hosomichi:2021gxe} are characterized by a singular gauge background $A=b \dd{\varphi}$ close to the vortex locus, where $\varphi$ is an angular coordinate in the plane orthogonal to the line, and $b$ is from the Cartan subalgebra.
This defines a line defect that is local in the plane orthogonal to the vortex, at least away from the boundary, because gauge invariant objects do not feel the gauge holonomy.
This still holds near the Neumann boundary, where the gauge symmetry is unbroken.
However, if such a vortex ends at the Dirichlet boundary, it creates a non-trivial monodromy $e^{2\pi i b}$ for the boundary \emph{global} symmetry.
Hence for generic $b$, it is not a local operator from the 2D boundary point of view as it can be detected far away from the insertion point.
On the interval with the Dirichlet boundaries, such vortices do not lead to local operators in the IR, they become the twist fields that are not included in the VOA.

A less general possibility is a vortex characterised by a nontrivial background $A=b \dd{\varphi}$, yet its monodromy is trivial, $e^{2\pi i b}=1$.
The latter means that $b$ is a co-weight (in fact, a co-character, because the gauge symmetry forces us to consider the Weyl group orbit of $b$).
Thus such a vortex has its magnetic charge labeled by a cocharacter of $G$, or a subgroup $U(1)\hookrightarrow G$ taken up to conjugation.
This is the same as magnetic charges of the monopoles, and we can think of the vortex as an infinitesimal tube of magnetic flux, with the same amount of flux as created by the charge-$b$ magnetic monopole.
This also suggests that monopoles can be located at the endpoints or at the junctions of vortices.

Such vortex lines, however, are not expected to be independent line operators in the IR, at least for a non-zero CS level there.
When $k>h^\vee$, our 3D theory becomes the level $k-h^\vee$ CS theory at large distances.
It has been argued in \cite{Moore:1989yh} that such vortex lines are equivalent to Wilson lines in a CS theory (for the abelian case, see the argument in \cite{Kapustin:2012iw}.)
Thinking of the CS level as a
pairing $K: \Gamma\times \Gamma \to \Z$, where $\Gamma\subset \mathfrak{t}$ is the co-weight lattice of $G$, the representation of the Wilson line is determined precisely by the weight $K(\cdot, b)$.
This is also consistent with the well-known fact (at least in the abelian case \cite{Kapustin:2010hk}) that in the presence of CS level, monopoles develop electric charges, and so Wilson lines can end on them.
In particular, \cite{Kapustin:2010hk} used this to argue that the Wilson lines whose charges differ by a multiple of $k$ are isomorphic, thus showing that the finite spectrum of Wilson lines in a CS theory is a non-perturbative effect manifested via the monopoles.
Note that all the statements we referred to here are about the non-SUSY CS theory, but they extend verbatim to the half-BPS lines in the $\cN=2$ case. 

To apply these observations to the interval theory, it is convenient to assume that the interval is long enough, such that we flow to the CS first and only then to 2D.
At least for $k>h^\vee$ this appears to be a harmless assumption, since SUSY suggests that the BPS sector is not sensitive to the interval length, and the IR physics is also more straightforward in this case \cite{Bashmakov:2018wts}. 

It is therefore natural to conjecture that the non-perturbative effects on the interval with Dirichlet boundaries are captured by the monopoles.
In the bulk they ensure that there are only finitely many inequivalent Wilson lines, and the boundary monopoles modify the boundary VAs.
The details depend strongly on whether $G$ is abelian or non-abelian.

In the non-abelian case, the monopoles at the level $k-h^\vee$ boundary, according to the conjecture in \cite{costello2020boundary}, turn the perturbative affine VOA $V_{k-h^\vee}(\mathfrak{g})$ into its simple quotient $L_{k-h^\vee}(\mathfrak{g})$.
The finitely many bulk Wilson lines correspond to the integrable representations of the latter. 
As for the boundary monopoles at the opposite end, it seems unlikely that they can modify $V_{-k-h^\vee}(\mathfrak{g})$, which is already simple.
The total nonperturbative interval algebra in this case appears to be some modification of the CDO that contains $L_{k-h^\vee}(\mathfrak{g})$ rather than $V_{k-h^\vee}(\mathfrak{g})$.
We do not know its structure yet, and will explore it elsewhere.

The abelian case will be studied in the next sections, where we consider $G=U\left(1\right)$ in detail.
It is a little bit different as there is no abelian WZ term in 2D, and the level is encoded in the periodicity  of the compact boson.
The monopole corrections extend the boundary affine $\mathfrak{u}(1)$ to the lattice VOA, also known as the abelian WZW.
The non-isomorphic Wilson lines correspond to the finite set of modules of the lattice VOA.

\subsection{U(1) }
We restrict $ k $ to lie in $ 2\mathbb{Z} $ and consider the $ k\neq 0 $ case first.
$ B_{\ell},B_{r} $, $ \int \cA_t \dd{t} $ are the possible candidates for elements of the perturbative interval VOA.
The analysis of the OPE of $B$'s still holds, and
$ B $'s on the left and right boundaries commute with each other (have the regular OPE).
So, the full set of OPEs again:
\begin{equation}
\label{eq:BBope}
  \begin{split}
     B_{r}  \left(z\right)B_{r} \left(w\right) &\sim \frac{k}{(z-w)^{2 }},\\
     B_{\ell}  \left(z\right)B_{\ell} \left(w\right) &\sim \frac{-k}{(z-w)^{2 }},\\
     B_{r} (z) B_{\ell}(w) &\sim 0.
  \end{split}
\end{equation}
Surprisingly, there is also one relation connecting the left and right $B$'s to the Wilson line, which follows from the following transformation:
\begin{equation}
  \label{eq:diffBrBlA}
  \begin{split}
   Q\int \mathcal{A}^*_t \dd{t} =  B_{r}-B_{\ell} - \frac{k}{2\pi} \partial_{z}\int\mathcal{A}_t \dd{t}.
  \end{split}
\end{equation}
Thus, we see that the derivatives of $ \int \mathcal{A} $ are not independent operators in the $Q$-cohomology.
We will encounter similar phenomena when we consider singular vectors for affine algebras on the boundaries later.
We can choose any two operators out of $\{B_r,B_\ell,\partial\int \mathcal{A}\}$ as the independent generators, and to be consistent with the previous section, we take $\{B_{\ell,r}\}$.
The stress-energy tensor is also included, but for $ k\neq 0 $ it should be expressed in terms of $ B_{\ell,r} $ as we discussed before. 
In this particular case, $ T = \frac{1}{2k} B_{r}^2 - \frac{1}{2k} B_{\ell}^2$.

We also expect that this perturbative algebra is extended by the boundary monopole operators \cite{Dimofte:2017tpi}.
The boundary monopole $M_p$ is obtained from the usual monopole by cutting it in half and restricting to a half-space in such a way that the integral over the half sphere is
\begin{equation}
  \begin{split}
    \int_{HS^2 } F = 2 \pi p,   \quad p \in \mathbb{Z}.
  \end{split}
\end{equation}
Due to the CS term, the monopole operator $M_p$ develops an electric charge, as was mentioned before.
Hence this operator can only exist by itself on the boundary where its electric charge is global.
Under the global boundary $U(1)$ action by $e^{i \alpha}$ it transforms as:
\begin{equation}
    M_p \to e^{- i p k \alpha} M_p.
\end{equation}
To insert it in the bulk, we need to consider a composite operator with a Wilson line attached to a monopole $e^{ i kp \int_0^t \cA} M_p(t)$ to cancel the anomalous transformation.
In fact, we can pull a boundary monopole $M_p$ off the boundary while extending a Wilson line of charge $kp$ between the monopole and the boundary to respect gauge invariance, as shown in Fig. \ref{fig:Wilson-monopole}.
\begin{figure}[!h]
    \centering
    \def\svgwidth{1\columnwidth}
\begingroup%
  \makeatletter%
  \providecommand\color[2][]{%
    \errmessage{(Inkscape) Color is used for the text in Inkscape, but the package 'color.sty' is not loaded}%
    \renewcommand\color[2][]{}%
  }%
  \providecommand\transparent[1]{%
    \errmessage{(Inkscape) Transparency is used (non-zero) for the text in Inkscape, but the package 'transparent.sty' is not loaded}%
    \renewcommand\transparent[1]{}%
  }%
  \providecommand\rotatebox[2]{#2}%
  \newcommand*\fsize{\dimexpr\f@size pt\relax}%
  \newcommand*\lineheight[1]{\fontsize{\fsize}{#1\fsize}\selectfont}%
  \ifx\svgwidth\undefined%
    \setlength{\unitlength}{220.18819023bp}%
    \ifx\svgscale\undefined%
      \relax%
    \else%
      \setlength{\unitlength}{\unitlength * \real{\svgscale}}%
    \fi%
  \else%
    \setlength{\unitlength}{\svgwidth}%
  \fi%
  \global\let\svgwidth\undefined%
  \global\let\svgscale\undefined%
  \makeatother%
  \begin{picture}(1,0.17882913)%
    \lineheight{1}%
    \setlength\tabcolsep{0pt}%
    \put(0,0){\includegraphics[width=\unitlength,page=1]{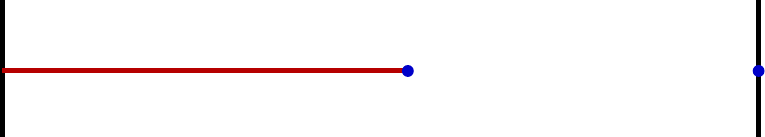}}%
    \put(0.49647652,0.03237282){\color[rgb]{0,0,0}\makebox(0,0)[lt]{\lineheight{1.25}\smash{\begin{tabular}[t]{l}$M_p$\end{tabular}}}}%
    \put(0.21315362,0.10219579){\color[rgb]{0,0,0}\makebox(0,0)[lt]{\lineheight{1.25}\smash{\begin{tabular}[t]{l}$\mathcal{W}_{kp}$\end{tabular}}}}%
    \put(0.70636273,0.07911824){\color[rgb]{0,0,0}\makebox(0,0)[lt]{\lineheight{1.25}\smash{\begin{tabular}[t]{l}$\Longleftrightarrow$\end{tabular}}}}%
    \put(0.93195458,0.032428){\color[rgb]{0,0,0}\makebox(0,0)[lt]{\lineheight{1.25}\smash{\begin{tabular}[t]{l}$M_p$\end{tabular}}}}%
  \end{picture}%
\endgroup%

    \caption{The bulk monopole $M_p$ is connected by a Wilson line of charge $kp$ to the Dirichlet boundary. In the $Q$-cohomology, due to the topological invariance in the $t$ direction, this is equivalent to the boundary monopole $M_p$.}
    \label{fig:Wilson-monopole}
\end{figure}
Recall that the twisted theory is topological in the $t$ direction \cite{costello2020boundary}, meaning that the $t$ translations are $Q$-exact.
Thus the length of the Wilson line in Fig. \ref{fig:Wilson-monopole} is irrelevant, and pulling a monopole off the boundary is an identity operation in the cohomology.
Using this observation, we can easily find the OPE of a boundary monopole $M_p$ with the boundary current $B$.
Pulling $M_p$ far away from the boundary, it can no longer contribute to such an OPE.
Essentially, for the purpose of computing OPEs with the boundary operators (in the cohomology), the boundary monopole $M_p$ is equivalent to the Wilson line of charge $kp$ ending at the boundary.
We have already computed the OPE of $B$ with the Wilson line in earlier sections, so the answer follows immediately.
Recalling that there is also a second boundary (and operators on the opposite boundaries have regular OPE), we thus find:
\begin{equation}
\label{eq:BMope}
    \begin{split}
        &B_\ell(z) M_p(0) \sim \frac{k p}{z} M_p(0),\\
        &B_r(z) M_p(0) \sim 0,
    \end{split}
\end{equation}
for the monopole on the left boundary.
A more semi-classical way to derive this is by computing the on-shell value of $B$ in the twisted formalism.
One can easily check that the monopole singularity implies $B\sim \frac{kp}{z}$ (where the factors of $2\pi$ were scaled away).
Similar results hold for monopoles on the right boundary.

In fact, repeating our argument, a monopole on the right boundary is equivalent to the same monopole on the left boundary connected by the Wilson line to the right boundary (and vice versa), see Fig. \ref{fig:monopoles}.
Thus, one can express the right monopoles as $ M_{p}'(t=L) = e^{- i kp \int_0^L \cA } M_{p}(t=0) $, and they are not independent generators.

Via the semiclassical analysis of the monopole operator, one can show that its OPE with the Wilson line is
\begin{equation}
    M_p(z)e^{iq\int_0^L\cA(0,t)} \sim z^{qp} :M_p(z)e^{iq\int_0^L\cA(0,t)}:,
\end{equation}
where $::$ means the normal ordering.
From this equation we see that the normal ordering for $e^{- i kp \int_0^L \cA } M_{p}(t=0)$ is not required: The leading OPE term scales as $z^{p^2}$.
To compute the missing $M_{p_1}(z)M_{p_2}(0)$ OPE, we again use the trick of replacing the left boundary monopole $M_{p_2}$ by the Wilson line attached to the right monopole $e^{ikp_2 \int_0^L\cA_t\dd{t}} M'_{p_2}$,
\begin{equation}
    M_{p_1}(z,0)M_{p_2}(0,0)\sim M_{p_1}(z,0)e^{ikp_2 \int_0^L\cA_t(0,t)\dd{t}}  M'_{p_2}(0,L).
\end{equation}

In this equation, the monopole $M'_{p_2}$ is separated in the $t$ direction from the rest of the operators.
Hence the singular terms only come from the OPE between the Wilson line and the monopole $M_{p_1}$.
This results in the following:
\begin{equation}
    M_{p_1}(z,0)M_{p_2}(0,0)\sim z^{kp_1 p_2} :M_{p_1}(z,0)e^{ikp_2 \int_0^L\cA_t(0,t)\dd{t}}: M'_{p_2}(0,L) .
\end{equation}
It remains to bring $M'_{p_2}$ back to the left boundary, colliding with it along the $t$ direction, which produces no further singularities due to the topological invariance.
Finally, we observe that the magnetic charges are additive under the collision, and conclude:
\begin{equation}
\label{eq:MMope}
    M_{p_1}(z)M_{p_2}(0) \sim z^{k p_1 p_2}:M_{p_1}(z)M_{p_2}(0):  = z^{k p_1 p_2} M_{p_1+p_2}(0)+\dots.
\end{equation}
The full set of strong generators can be chosen to be
\begin{equation}
\label{eq:U1full}
  \begin{split}
    \left< B_{\ell,r},\, e^{i p \int A },\, M_{p}| p \in  \mathbb{Z} \right>.
  \end{split}
\end{equation}
Looking closer at \eqref{eq:BBope}, \eqref{eq:BMope} and \eqref{eq:MMope}, we recognize the rank one lattice VOA setting, with the compact boson of radius $\sqrt{k}$.
We can extend the $ U\left( 1 \right)_{k} $ current $B$ by the monopoles (vertex operators) $M_p$ to the $\Z[\sqrt{k}]$ lattice VOA, also called the $ U(1)_{k} $ WZW.
This can be done individually on the left and right boundaries, giving the abelian WZW$_k$ and WZW$_{-k}$, respectively.
Then the number of Wilson lines is rendered finite, and the algebra is generated as an extension of
\begin{equation}
    {\rm WZW}_k\otimes {\rm WZW}_{-k}
\end{equation}
by the bimodules corresponding to the Wilson lines $e ^{ -i \frac{k-2}{2} \int A }, \ldots, e ^{ i \frac{k}{2}\int A } $.
Indeed, it is consistent with the limit of the large interval.
The theory in this limit reduces to the Chern-Simons at level $k$  in the IR, which only admits a finite number ($k$, to be more precise) of Wilson lines in the bulk, and supports WZW at the boundaries.

Now, let us turn to the $ k=0 $ case.
One can easily see that we no longer have two separate operators $ B_{\ell, r} $, as $ B $ is in the bulk cohomology.
Thus, we can pull it off the boundary and bring it all the way to the opposite one without changing the cohomology class.
Technically, this follows from the relation \eqref{eq:diffBrBlA} involving the descendant line of $B$. 
Monopole operators no longer have electric charge, so they can be freely moved into the bulk, and they commute (have regular OPE) with all local operators.
In other words, monopoles are naturally elements of the bulk VA $\cV$.
The stress-energy tensor is no longer a sum of the two boundary terms but an independent operator.
So, the algebra ceases to be generated as a bimodule of the left and right algebras, and we propose the following set of strong generators:
\begin{equation}
  \begin{split}
    \left<  B,\,T, e ^{ i n \int A }, M_n| n \in \mathbb{Z} \right>.
  \end{split}
\end{equation}

\section{2D Perspective or \texorpdfstring{$\beta\gamma$}{Beta-Gamma} System}
\label{sec:2d}
The IR limit of a 3D $\cN=2$ YM-CS theory on a slab is in general controlled by the dimensionless parameter $\gamma=Le^2_{\rm 3d}$. When $ L \gg \frac{1}{e^{ 2 }_{\rm 3d}} $ or $ \gamma \gg 1 $, the bulk first flows as a 3D theory to the Chern-Simons TFT with $k_{\rm eff}=k-h^\vee$ (for $k\geq h^\vee$). The boundaries flow to some 2D relative CFTs matching the CS boundary anomalies. The interval VOA in this limit was studied in the previous section. We do expect, however, that the answer does not depend on $\gamma$, at least for $k>h^\vee$ (when the SUSY is unbroken).

In the opposite limit $ \gamma \ll 1 $, the system behaves as a 2D sigma model.
It was shown in \cite{DL1} that in this regime, the theory is described by an $\cN=(0,2)$ NLSM into the complexified group $G_\C \approx T^* G$ at level $k$.
The 2D coupling is related to the 3D coupling as $      \frac{L}{e^2_{\rm 3d}} = \frac{1}{e^2_{\rm 2d}}$.
The gauge degrees of freedom are integrated out, except for the complex Wilson line along the interval, which is left as an effective degree of freedom.
Its compact part is valued in $G$, and the vector multiplet scalars lie in the cotangent space.
The Chern-Simons term reduces to a non-trivial $ B $-field, or the Wess-Zumino term, necessary for anomaly matching. 
In this section, we explore our problem from such a 2D viewpoint, as well as study connections between the 3D and 2D setups.

As reviewed in Section \ref{sec:basics}, the perturbative chiral algebra of 2D $\cN=(0,2)$ NLSM into the target $X$ is captured by the $\beta\gamma$ system into $X$ \cite{witten2007,nekrasov2005lectures}. 
More precisely, it is given by the cohomology of a sheaf of $\beta\gamma$ systems on $X$, also called the sheaf of chiral differential operators (CDO) \cite{gorbounov1999gerbes,gorbounov2001chiral,gorbounov2004gerbes}.
It is conveniently computed using the \v{C}ech cohomology, and the resulting vector space with the VA structure on it is denoted $\cD_k[X]$, where $k$ is the B-field.
In our case, the target is a simple complex group $G_\C$, so $k\in \C=H^3(G_\C, \C)$.
This parameter is the well-known modulus of the CDO valued in $H^1(X,\Omega^{2, cl}(X))$, which in general is not quantized, however, in our models $k$ is an integer originating as the CS level in 3D.
Since $G_\C$ has a trivial tangent bundle, $c_1(G_\C)=0$ and $p_1(G_\C)=0$, so the sigma model anomalies \cite{Moore:1984ws} vanish.
As reviewed before, $c_1(G_\C)=0$ implies that $\cD_k[G_\C]$ is a VOA, i.e., it has a well-defined Virasoro element.

Notably, $\cD_k[G_\C]$ containes the affine sub-VOAs $V_{k-h^\vee}(\mathfrak{g})$ and $V_{-k-h^\vee}(\mathfrak{g})$ corresponding to the left and right $G$-actions on $G_\C$ \cite{gorbounov2001chiral}.
These clearly originate as the perturbative baoudnary VOAs in 3D.
Additionally, $\cD_k[G_\C]$ contains holomorphic functions on the group (functions of $\gamma$ in the $\beta\gamma$ language). These originate from the Wilson lines stretched across the interval.
For generic $k\in\C\setminus\Q$, functions on the group together with the pair of affine VOAs $V_{\pm k -h^\vee}(\mathfrak{g})$ generate the whole $\cD_k[G_\C]$. For $k\in\Q$ this is no longer true, and $\cD_k[G_\C]$ as a $V_{k -h^\vee}(\mathfrak{g})\otimes V_{-k -h^\vee}(\mathfrak{g})$--bimodule has a very intricate structure \cite{ZHU20081513}.
Nonetheless, $\cD_k[G_\C]$ remains a simple VOA for all values of $k$.

Below we will find a full non-perturbative answer in the abelian case and comment on what is known in the non-abelian case. By comparing the 3D and 2D answers when possible, we will veryfy that the interpolation between $\gamma\gg 1$ and $\gamma\ll 1$ determines an isomorphism of the respective chiral algebras:
\begin{equation}
   \begin{split}
      H^{ \bullet }\left( {\rm Ops}^{\rm 3d },\overline{Q}_{+} \right) &\xrightarrow{\sim} H^{ \bullet }\left({\rm Ops}^{\rm 2d }, \overline{Q}^{\rm 2d }_{+} \right).
   \end{split}
\end{equation}

\subsection{U(1)}
\label{sec4mon}


For $G=U(1)$, the target space of our model is $ U\left( 1 \right)_{\mathbb{C}}\cong \C^* $, which is a cylinder.
The $\cN=(0,2)$ sigma model into $\C^*$ was considered in \cite{Dedushenko:2017tdw}, and we will make contact with it later.
For now, let us follow the $\beta\gamma$ approach first.
The CDO only have zeroth cohomology in this case, which are the global sections of this sheaf forming the perturbative VOA. Then we will identify its non-perturbative extension.
We work with the even Chern-Simons level $k$ in what follows.

We introduce two cooridnate systems on the cylinder: One is just $\gamma\in\C^*$, and another has $\tilde\gamma$ as a periodic complex boson.
Its periodicity is what encodes the ``level'' in the abelian case, descending from the CS level in 3D:
\begin{equation}
    \tilde\gamma \sim \tilde\gamma + i \pi \sqrt{2k},
\end{equation}
where the conventions are adjusted to match \cite{witten2007}. 
The relation between the two is, naturally,
\begin{equation}
    \gamma = e^{\sqrt{\frac{2}{k}}\tilde\gamma}.
\end{equation}
The quantum transformation between $\beta$ and $\tilde\beta$ is
\begin{equation}
\begin{split}
    \tilde\beta &= \sqrt{\frac{2}{k}}\left(\beta\gamma - \frac12 \gamma^{-1}\partial\gamma \right),\\
    \label{eq:beta_u1_inv}
    \beta &= \sqrt{\frac{k}{2}}\left( \tilde\beta e^{-\sqrt{\frac{2}{k}}\tilde\gamma} - \frac1k e^{-\sqrt{\frac{2}{k}}\tilde\gamma}\partial\tilde\gamma \right).
\end{split}
\end{equation}

Let us define two currents: 
\begin{equation}
   \begin{array}{ll}
      J_{\ell} = \frac{1}{\sqrt{2} } \left( \tilde\beta + \partial  \tilde\gamma \right), & J_{r} = \frac{1}{\sqrt{2} }\left( \tilde\beta - \partial  \tilde\gamma \right),  
   \end{array}
\end{equation}
where $\tilde\gamma \propto \int_{t} \mathcal{A}$ is a (log of a) Wilson line from the 3D perspective, and the difference is $\sqrt{2}\partial\tilde\gamma$, which is proportional to $\partial\int_t \cA$ as in \ref{eq:diffBrBlA}.
One can easily see that these operators commute and have the following OPEs:
\begin{equation}
    \begin{split}
        J_{\ell}(z) J_{\ell}(0) \sim \frac{-1}{z^2}, \quad
        J_{r}(z) J_{r}(0) \sim \frac{1}{z^2}.
    \end{split}
\end{equation}
We observe that they only differ from the 3D  boundary currents $B_{\ell,r}$ by the normalization factor $\sqrt{k}$.
We find such conventions useful for this section.
The stress-energy tensor is $ -\tilde\beta \partial \tilde\gamma $ and the central charge is equal to 2.
The stress-energy tensor does not require any modifications.
The single-valued operator corresponding to the charge $p$ Wilson line is $ W_{p} = \gamma^p$.
The conformal dimensions of $W_p$ can be easily found to be equal to zero.

The operators $\beta$ and $\gamma^p$, $p\in\Z$, are already global sections of the CDO, and they generate the full perturbative VOA, which is most concisely described as \emph{the $\C^*$-valued $\beta\gamma$ system}.
Note that this implies that $\gamma$ can be inverted.
In practice, it is convenient to do so by inverting the zero mode of $\gamma$ only:
\begin{equation}
    \gamma^{-1}(z) = \frac1{\gamma_0 + \Delta\gamma} = \frac{1}{\gamma_0} \sum_{n=1}^{\infty} \left( -1 \right) ^{ n }\left(\gamma_0^{-1}\Delta \gamma \right)^{ n },
\end{equation}
where
\begin{equation}
    \Delta\gamma = \sum_{n\in\Z\setminus 0} \frac{\gamma_n}{z^n}.
\end{equation}

So, what are the non-perturbative effects that we are missing here?
3D physics suggests that they are boundary monopoles.
Imagine inserting an improperly quantized monopole on the boundary with $ \int_{HS^2} F = 2\pi \alpha $ at the origin $z=0$.
We know that $ \gamma $ is related to the connection $ \gamma \propto e^{ i \int_{t}A } $.
Now we can consider moving $\gamma$ around the insertion of this monopole, as in $ \gamma \left( e^{i\phi}z\right)  $, and let us track the phase that is acquired in the process.
It is clear that the Wilson line sweeps the entire magnetic flux of the monopole in the end, and we obtain:
\begin{equation}
\label{eq:monodromy}
   \begin{split}
      \gamma\left( e^{2\pi i} z \right) = e ^{ i \int_{M} F } \gamma \left( z \right) = e ^{ 2 i \pi \alpha } \gamma \left( z \right)  ,
   \end{split}
\end{equation}
where $ M $ is a tube stretched between the two boundaries.
Thus, $  \gamma $ in the presence of the improperly quantized monopole behaves as:
\begin{equation}
\label{eq:shiftmodes}
   \begin{split}
   \gamma = \sum_{n}^{} \frac{\gamma_{n}}{z^{ n-\alpha }}.
   \end{split}
\end{equation}
Now let us go back to the actual monopole that has $\alpha=p\in\Z$.
Equation \eqref{eq:monodromy} shows that $\gamma$ winds $p$ times around the origin as we go once around $z=0$, and we expect the same shift as in \eqref{eq:shiftmodes} with $\alpha=p$.
The actual answer is a little bit trickier, but this gives us a good starting point.

An operator that ``shifts the vacuum'' in the $\beta\gamma$ system  by $p$ units will be called $ M_{p} $.
The module that it creates is known as the spectrally flowed module.\footnote{In superstrings such operators are often called the picture changing operators.}
We can look at the Hilbert space interpretation of these modules.
If we place an $ M_{p} $ operator at the origin, then the mode expansions of $ \beta $ and $ \gamma $  take the following form:
\begin{equation}
   \begin{split}
      \beta  = \sum_{n}^{} \frac{\beta_{n}}{z^{ n+p+1 }},   \quad \gamma = \sum_{n}^{} \frac{\gamma_{n}}{z^{ n-p }}.
   \end{split}
\end{equation}
Here the modes obey the usual commutation relations:
\begin{equation}
    [\beta_n, \gamma_m] = \delta_{n+m,0},
\end{equation}
and the lowest state $|p\rangle$ corresponding to $M_p$ via the state-operator map is defined by
\begin{equation}
    \gamma_{n+1}|p\rangle = \beta_n|p\rangle=0,\quad \text{for } n\geq 0.
\end{equation}
Essentially, we shifted the modes of the vacuum module by $p$ positions and then relabeled them.


The inverse of $ \gamma $ in the presence of $ M_{p} $ is defined in the similar manner:
\begin{equation}
   \begin{split}
      \gamma^{-1}\left( z \right)  = \dfrac{1}{\gamma_0 z^{p}+\Delta \gamma} = \frac{z^{ -p }}{\gamma_0} \sum_{n=0}^{\infty} \left( -1 \right) ^{ n }\left( z^{ p }\gamma_0^{-1}\Delta \gamma \right)^{ n }.
   \end{split}
\end{equation}

Let us compute charges and the conformal dimension of $M_p$.
The currents $J_{\ell},J_{r} $ in the $\gamma$ coordinate system are 
\begin{equation}
\label{eq:ABcurrents}
   \begin{split}
      \begin{pmatrix} 
      J_{\ell}\\
      J_{r}
      \end{pmatrix} 
      = \begin{cases}
         \frac{1}{\sqrt{k} } \left(\beta \gamma+\frac{\left( k-1 \right) }{2}\gamma^{-1}\partial \gamma\right),\\
         \frac{1}{\sqrt{k} }\left(\beta \gamma+\frac{\left( -k-1 \right) }{2}\gamma^{-1}\partial \gamma\right).\\
      \end{cases} 
   \end{split}
\end{equation}

The following product can be computed using the mode expansions:
\begin{equation} \beta\left( z \right) \gamma\left( w \right)  \ket{p} = - \left( \frac{w}{z} \right) ^{ p } \frac{1}{z-w}\ket{p}.
\end{equation}
Subtracting the singularity $ \frac{1}{z-w} $, we are getting:
\begin{equation}
   \begin{split}
       :\beta \gamma: \ket{p} = \frac{p}{z} \ket{p}   .
   \end{split}
\end{equation}
Let us denote $ :\beta \gamma: $ as $ J_0 $ in what follows.
The $ J_0 $ charge of $ M_{p} $ is $ p $, and the charges of $ \beta $ and $ \gamma $ are $ 1 $ and $ -1 $ respectively. 
The difference of $J_\ell$ and $J_r$ is proportional to $\gamma^{-1}\partial\gamma$, which is itself a current measuring the winding number.
In the presence of $M_p$, the expression $ \gamma^{-1}\partial \gamma $ can be computed directly from the definition:
\begin{equation}
   \begin{split}
      \gamma^{-1}\partial \gamma = \frac{p}{z} + \ldots
   \end{split}
\end{equation}
Thus the winding charge is equal to  $ p $ for $ M_{p} $ and 0 for $ \beta $ and $ \gamma $.
The $ U\left( 1 \right)_{\ell,r}  $ charges for $ M_{p}  $ are then computed from \eqref{eq:ABcurrents} and are $\frac{p}{\sqrt{k}}(\frac{1+k}{2},\frac{1-k}{2})$.

\begin{figure}[ht]
    \centering
    \def\svgwidth{0.75\columnwidth}
    \import{./figures/}{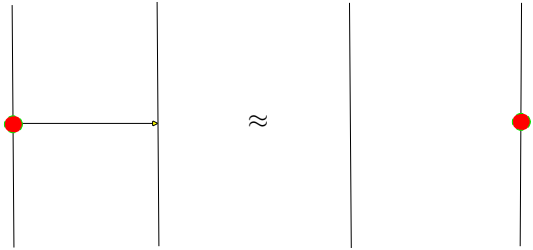}

    \caption{Moving a monopole of magnetic charge $p$ from one boundary to another creates a Wilson line of electric charge $kp$.}
    \label{fig:monopoles}
\end{figure}

The stress energy tensor written in terms of $\beta\gamma$ is $ -\beta\partial \gamma + \frac{1}{2} \left(\log \gamma \right) '' $ as the holomorphic top form in this case is $\displaystyle w \propto \frac{\dd \gamma}{\gamma}$.
Moreover, it can be written as $ T_{\beta \gamma} = \frac{1}{2}J_{r}^2-\frac{1}{2}J_{\ell}^2 $, which is indeed what was expected.
Computing the conformal dimensions of $M_p$ requires again the normal ordering prescription:
\begin{equation}
   \begin{split}
      \lim_{z \to w}\left( -\beta\left( z \right) \partial \gamma\left( w \right)  \ket{p} - \frac{1}{\left( z-w \right) ^2}\ket{p}\right) &= \left(\frac{p-p^2}{2w^2}+\frac{\beta_{-1}\gamma_{0}}{w}+\dots\right) \ket{p}, \\
      \frac{1}{2} (\log \gamma)'' \ket{p} &=\left( -  \frac{1}{w^2} \frac{p}{2} + \dots\right) \ket{p},
   \end{split}
\end{equation}
where we again subtracted all singular terms.
It follows that the dimension of the operator $ M_{p} $ is $- \frac{p^2}{2} $.

Now that we understand the operators $M_p$ fairly well, we need to identify precisely those that should be added to the $\beta\gamma$ system to obtain our non-perturbative VOA. They correspond to the boundary monopoles in 3D.
For that we need to find operators that are only charged under the left or right affine currents $J_{\ell,r}$.
If we had $ k=1  $ for a moment, the left monopole is just the flowed module $M_p$, as can be seen from its charges $ \left( J_{\ell},J_{r} \right) = \left( p,0 \right) $. 
The mathematical perspective on such modules was given in \cite{Ridout:2014,Allen:2020kkt}, and in their notations, $M_p$ generates $\sigma^p \cW_0^+$, where $\sigma$ denotes the spectral flow automorphism.

To tackle the general case, we just need to take a composite operator that has the correct charge.
The $ M_{p} $ by itself is charged as $ \frac{p}{\sqrt{k} }  \left(\frac{k+1}{2}, \frac{-k + 1}{2}  \right)  $.
So, if we take $ M_{p}' = (\gamma_0^{ \frac{p(k-1)}{2} } \ket{p})(z)$, we get an operator with the charges $ p (\sqrt{k} ,0) $, as required, and the conformal dimension $ \Delta_{M_p'} = p^2 \frac{k}{2} $.
This $M_p'$ generates the spectrally flowed module $\sigma^p \cW_{\frac{p(k-1)}{2}}$ in the notations of \cite{Allen:2020kkt}, which is just $\sigma^p \cW_0^+$ for $p(k-1)$ even and $\sigma^p \cW_{\frac{1}{2}}$ for $p(k-1)$ odd.\footnote{Since we consider even $k$, this is determined by the parity of $p$ only.}
In order to get monopoles on the other boundary, we need to consider the composite operator with the Wilson line $ W_{k} M'_{1 } $ (Fig. \ref{fig:monopoles}), which has charges $p(0,-\sqrt{k})$ and the same conformal dimension.
Thus we do not need a separate generator for that monopole. 
The generator content of our algebra matches the eqn. (\ref{eq:U1full}).

Let us summarize the result that we have obtained so far for the full nonperturbative VOA.
It is given by the $\C^*$-valued $\beta\gamma$ VOA (i.e., $\gamma$ is invertible,) extended\footnote{In fact, it is slightly redundant to say that $\gamma$ is invertible. The module $\cW_0^+$ of the usual $\beta\gamma$ system coincides with the vacuum module of such a $\C^*$-valued $\beta\gamma$ system. Thus the extension by $\cW_0^+$ automatically inverts $\gamma$.} by its modules $\sigma^{2p} \cW_0^+$ and $\sigma^{2p+1}\cW_{\frac12}$ for all $p\in\Z$, in the notations of \cite{Ridout:2014,Allen:2020kkt}.

One can also easily compute a character over the full chiral algebra:
\begin{equation}
  \begin{split}
  Z = Tr_{ \mathcal{H}}(q^{ L_0-\frac{c}{24} } x^{ J_{\ell} }y ^{ J_{r} }) = \frac{1}{\eta(q)^2} \sum_{n,m \in \mathbb{Z}}^{} q^{nm} x ^{ \frac{n}{\sqrt{k}}+ m \frac{\sqrt{k} }{2} } y ^{ \frac{n}{\sqrt{k}} - m \frac{\sqrt{k} }{2} } ,
  \end{split}
\end{equation}
which is clearly not a meromorphic function and can only be understood as a formal power series.
The obvious non-convergence of the character is expected, as
characters on the boundary with the positive level are convergent when $|q|<1$ \cite{Dimofte:2017tpi}, and on the opposite boundary the convergence is at $|q|>1$.
This trace, of course, can be reinterpreted from the 3D perspective as an index on the interval (see also \cite{Sugiyama:2020uqh}):
\begin{equation}
    Z_{I\times T^2} = Tr_{\mathcal{H}} (-1)^F e^{- 2 \pi R H} \prod e ^{ 2 \pi i J_i z_i}, 
\end{equation}
where $J_i$ are generators of the maximal tori of the boundary symmetries.

The $ k=0 $ situation is different and does not appear to be particularly interesting and well-behaved from the 2D viewpoint, so we skip it.

\subsubsection*{Dual boson}
One can also calculate the same algebra directly in the $\C^*$ sigma model, without going to the $\beta\gamma$-description.
The calculation was first done in \cite{Dedushenko:2017tdw}.
Let us connect it with our formulas for completeness.
Let $ \sigma  $ be the radial coordinate and $ X = X_{L}(z) + X_{R}(\overline{z}) $ be an angular coordinate on $\mathbb{C}^{ * }$.
Then $\tilde\beta$ and $\tilde\gamma$ are related to the free boson as in Sec. \ref{sec:02bgREV}:
\begin{equation}
   \begin{split}
     \partial \tilde\gamma = \partial  \sigma + i \partial X,\\
      \tilde\beta = \partial \sigma - i \partial  X.
   \end{split}
\end{equation}
In the $\gamma$ coordinate system one gets from (\ref{eq:beta_u1_inv}):
\begin{equation}
   \begin{split}
      \gamma &= e ^{ \frac{\sqrt{2} }{\sqrt{k} }\left( \sigma + i (X_{L}+X_R) \right)  },\\
      \beta &= \frac{\sqrt{k} }{{\sqrt{2} } }\left( \partial(\sigma - i X) e ^{ - \frac{\sqrt{2} }{\sqrt{k} }(\sigma + i X) }-
      \frac1k e ^{ - \frac{\sqrt{2} }{\sqrt{k} }\left( \sigma + i X \right)  }\partial( \sigma + i X) \right) \\
      &=-\frac{\sqrt{k}}{\sqrt{2}} \left( (1-\frac1k)\partial\sigma + i  (1+\frac1k) \partial X\right) e ^{ - \frac{\sqrt{2} }{\sqrt{k} }\left( \sigma + i X  \right)  }.
   \end{split}
\end{equation}
Redefine both $ X$ and $\sigma $ by $ \sqrt{2}  $ to match the notations of \cite{Dedushenko:2017tdw}, so we find:
\begin{equation}
   \begin{split}
      \gamma &= e ^{ \frac{ 1 }{\sqrt{k} }\left( \sigma + i (X_{L}+X_R) \right)  },\\
      \beta &=   -\frac{\sqrt{k}}{2} \left( (1-\frac1k)\partial\sigma + i  (1+\frac1k) \partial X\right) e ^{ - \frac{1 }{\sqrt{k} }\left( \sigma + i X  \right)  }.
   \end{split}
\end{equation}
The BPS vertex operators in this description take the form:
\begin{equation}
\label{eq:vertexC}
e^{i k_\ell X_L+k_r\left(\sigma+i X_R\right)},
\end{equation}
with 
\begin{equation}
\left(k_\ell, k_r\right)=\left(\frac{n}{R}+\frac{w R}{2}, \frac{n}{R}-\frac{w R}{2}\right), \quad n, w \in \mathbb{Z}.
\end{equation}
The radius $ R $ is related to the Chern-Simons level as $ R^2=k $.
We can see that these operators form the same lattice as we found in the $\beta \gamma$ system with the special shifted modules included.
In particular, $ \gamma^n $ here is the vertex operator with $\displaystyle k_{\ell}=k_{r} = \frac{n}{R} $.
At the same time the left monopoles have $(k_\ell,k_r)= p( \sqrt{k}  ,0) $, which means $n=wk/2=pk/2$, and the right monopoles have $(k_\ell, k_r)=p (0,-\sqrt{k} ) $, which corresponds to $n=-wk/2=-pk/2$.

\subsection*{No Mercy}
This final presentation allows us to identify the nonperturbative VOA even more explicitly in terms of the known VOAs.
Namely, let us denote the vertex operator representing the $Q$-cohomology class with the momentum and winding charges $(n,w)\in\Z^2$ by $V_{n,w}(z)$.
Then computing the OPE of vertex operators defined in \eqref{eq:vertexC}, we easily find the following:
\begin{equation}
\boxed{
    V_{n_1,w_1}(z) V_{n_2,w_2}(0) \sim z^{n_1 w_2 + n_2 w_1} :V_{n_1,w_1}(z) V_{n_2,w_2}(0):
    },
\end{equation}
which identifies our VOA as a lattice VOA for the smallest Narain lattice \cite{Narain:1985jj,Narain:1986am}, namely $\Z^2\subset \R^2$ with the scalar product:
\begin{equation}
    (n_1, w_1)\circ (n_2,w_2) = n_1 w_2 + n_2 w_1.
\end{equation}
Note that the two $U(1)$ currents can be obtained as $V_{-1,0}\partial V_{1,0}$ and $V_{0,-1}\partial V_{0,1}$:
\begin{equation}
    \begin{split}
        J_1 &= \frac1R \left(i \partial X_L + i\partial X_R + \partial\sigma\right),\\
        J_2 &= \frac{R}{2} \left( i \partial X_L - i\partial X_R - \partial\sigma\right).
    \end{split}
\end{equation}
Also notice a curious fact: While many of the steps in our analysis involved the CS level, the final answer does not depend on it.
This in fact serves as a consistency check for the following reason.
From the $\cN=(0,2)$ point of view, the compact boson radius $\sqrt{k}$ only enters the K\"ahler potential, thus it cannot affect the chiral algebra structure.

Together with the other two results in the earlier sections, we thus find three presentations for the nonperturbative VOA in the abelian case:
\begin{equation}
\label{eq:big_answer}
\boxed{\begin{matrix}\text{Narain lattice VOA}\\ \text{of rank two}\end{matrix}} \cong \boxed{\begin{matrix}\beta\gamma \text{ extended by}\\ \sigma^{2p} \cW_0^+ \text{ and } \sigma^{2p+1}\cW_{\frac12}\end{matrix}} \cong \boxed{\begin{matrix}{\rm WZW}_k\otimes {\rm WZW}_{-k}\\ \text{extended by bimodules}\end{matrix}}
\end{equation}


\subsection{SU(2)}
Let us now turn to a less-trivial example and discuss $G={\rm SU}(2)$, that is $G_\C=$SL $\left( 2,\C \right)  $.
The computation is more involved in this case, as we need to define everything on patches and consider a non-trivial gluing.
We will first find the global theory and discuss the moduli space, and then will turn to the non-trivial modules for boundary VOAs.

 SL$(2,\mathbb{C}) $ can be covered by two patches, the coordinates on which will be denoted as $ \gamma^{ i } $ and $ \widetilde{\gamma}^{ i } $ : 
\begin{equation*}
\begin{pmatrix}
a & b\\
c & d
\end{pmatrix},\ ad-bc=1 \xmapsto{a\ne 0} \begin{pmatrix}
\gamma^1 & \gamma^2\\
 \gamma^3 & {} 
\end{pmatrix}= \begin{pmatrix}
a & b\\
c & {}
\end{pmatrix}\in \mathbb{C}^3\setminus\{\gamma^1=0\},
\end{equation*}
\begin{equation*}
\begin{pmatrix}
a & b\\
c & d
\end{pmatrix},\ ad-bc=1 \xmapsto{b\ne 0} \begin{pmatrix}
\tilde\gamma^1 & \tilde \gamma^2\\
{} & \tilde \gamma^3 
\end{pmatrix}= \begin{pmatrix}
a & b\\
{} & d
\end{pmatrix}\in \mathbb{C}^3\setminus\{\tilde\gamma^2=0\}.
\end{equation*}
Thus, the coordinate transformations have the following form:
\begin{equation}
\label{eq:gamma_transform}
\begin{split}
&\gamma^1=\tilde \gamma^1,\ \gamma^2=\tilde \gamma^2,\ \gamma^3=\frac{\tilde\gamma^1\tilde\gamma^3-1}{\tilde\gamma^2}\ \text{or}\\
&\tilde\gamma^1=\gamma^1,\ \tilde \gamma^2= \gamma^2,\ \tilde\gamma^3=\frac{1+\gamma^2\gamma^3}{\gamma^1}.
\end{split}
\end{equation}
The Jacobian matrix and its inverse can be computed to be
\begin{equation*}
g^i_{\ j}\equiv\frac{\partial \gamma^i}{\partial \tilde \gamma^j}= \begin{pmatrix}
1 & 0 & 0\\
0 & 1 & 0 \\
\frac{1+\gamma^2\gamma^3}{\gamma^1\gamma^2} & -\frac{\gamma^3}{\gamma^2} & \frac{\gamma^1}{\gamma^2}
\end{pmatrix}, \quad \frac{\partial \tilde\gamma^i}{ \partial\gamma^j}= \begin{pmatrix}
1 & 0 & 0\\
0 & 1 & 0 \\
-\frac{1+\gamma^2\gamma^3}{(\gamma^1)^2} & \frac{\gamma^3}{\gamma^1} & \frac{\gamma^2}{\gamma^1}
\end{pmatrix}.  
\end{equation*}
\begin{equation*}
\partial_j g^{i}_a \partial_i g^{j}_b=\partial_3 g^{3}_a \partial_3 g^{3}_b=\begin{pmatrix}
\frac{1}{(\gamma^1)^2} & -\frac{1}{\gamma^1\gamma^2} & 0\\
-\frac{1}{\gamma^1\gamma^2} & \frac{1}{(\gamma^2)^2} & 0 \\
0 & 0 & 0
\end{pmatrix}.   
\end{equation*}

As we mentioned earlier, for each left- and right-invariant vector fields there exists a corresponding VOA sub-algebra \cite{gorbounov2001chiral} that saturates boundary anomalies. 
The boundary with the negative anomaly usually corresponds to a relative CFT with the anti-holomorphic dependence on the coordinates, and it transforms into a chiral algebra with the negative level after passing to the $\overline{Q}$-cohomology: $ k_{coh} = k_{\ell} - k_{r}$ \cite{witten2007}.

Let us find these algebras and the CDO sections explicitly.
Note that there is nothing left except global sections as the manifold is Stein and does not support geometric objects that can form a higher degree cohomology.
So, the only fields that can contribute are in $ H^{ 0 }($SL$(2),\hat{A}) $.

Let us first write out all classical vector fields.
To do this, we will use the following well-known form of basis at the identity point of the group:
\begin{equation}
   e = 
\begin{pmatrix}
  0  & 1 \\
   0 & 0 \\
\end{pmatrix}
   \quad
f = 
\begin{pmatrix} 
   0& 0\\
   1 & 0
\end{pmatrix} 
   \quad
   h = \begin{pmatrix} 
      1 &  0 \\ 
      0 & -1
   \end{pmatrix} 
\end{equation}
and carry it over the whole manifold by $ L_{g}^{ * } V^{ \mu }\partial _{\mu}|_{1} = (gV)^{ \mu } \partial _{\mu}|_{g}  $.

Local sections, corresponding to the left-invariant vector fields, then have the following form in both patches:
\begin{equation}
\begin{array}{ll}

e_\ell=\gamma_1\beta_2 & \tilde e_\ell=\tilde\gamma_1\tilde\beta_2+\tilde\gamma_2^{-1}(\tilde\gamma_1\tilde\gamma_3-1)\tilde\beta_3\\[1.5ex]

f_\ell=\gamma_2\beta_1+\gamma_1^{-1}(1+\gamma_2\gamma_3)\beta_3 {\quad\quad} & \tilde f_\ell=\tilde \gamma_2\tilde\beta_1\\[1.5ex]
h_\ell=\gamma_1\beta_1-\gamma_2\beta_2+\gamma_3\beta_3  & \tilde h_\ell=\tilde\gamma_1\tilde\beta_1-\tilde\gamma_2\tilde\beta_2-\tilde\gamma_3\tilde\beta_3.\\
\end{array}
\end{equation}
Local sections corresponding to the right-invariant vector fields in both patches are
\begin{equation}
\begin{array}{ll}

e_r=\gamma_1\beta_3 &  \tilde e_r=\tilde \gamma_2\tilde\beta_3\\[1.5ex]

f_r=\gamma_3\beta_1+\gamma_1^{-1}(1+\gamma_2\gamma_3)\beta_2 {\quad\quad} & \tilde f_r=\tilde\gamma_2^{-1}(\tilde\gamma_1\tilde\gamma_3-1)\tilde\beta_1+\tilde\gamma_3\tilde\beta_2\\[1.5ex]

h_r=\gamma_1\beta_1+\gamma_2\beta_2-\gamma_3\beta_3  & \tilde h_r=\tilde\gamma_1\tilde\beta_1+\tilde\gamma_2\tilde\beta_2-\tilde\gamma_3\tilde\beta_3.

\end{array}
\end{equation}
The normal ordering for these fields is chosen exactly in the way they are written, $ abc \defequal : a:bc::$, and will be omitted from this point on to unclutter notations.

By using (\ref{eq:quan_beta}), one can easily obtain the following transformation formulas:

\begin{equation}
   \label{eq:quan_beta_sl}
\begin{split}
&\tilde \beta_1=\beta_1+\beta_3\left(\frac{\gamma^3}{\gamma^1}+\frac{1}{\gamma^1\gamma^2}\right)-\frac{1}{2}\left(\frac{1}{(\gamma^1)^2}\partial\gamma^1-\frac{1}{\gamma^1\gamma^2}\partial \gamma^2\right),\\
&\tilde \beta_2=\beta_2-\beta_3\frac{\gamma^3}{\gamma^2}-\frac{1}{2}\left(-\frac{1}{\gamma^1\gamma^2}\partial\gamma^1+\frac{1}{(\gamma^2)^2}\partial \gamma^2\right),\\
&\tilde \beta_3=\beta_3\frac{\gamma^1}{\gamma^2}.
\end{split}
\end{equation}
Note that here we choose to set the moduli parameter $\mu_{ab}$ to zero.
Now we will combine (\ref{eq:gamma_transform}) and (\ref{eq:quan_beta_sl}) to
find the corrected version of these fields.
After either doing a tedious calculation or applying the Mathematica
tool attached, one can obtain the corrected version of the left- and right-invariant vector fields, respectively:
\begin{equation}
\begin{array}{ll}
-H_\ell\equiv h_\ell=\tilde h_\ell& -H_r\equiv h_r+\dfrac{\partial \gamma_1}{\gamma_1}=\tilde h_r+\dfrac{\partial \gamma_2}{\gamma_2} \\[2.5ex]

E_\ell\equiv e_\ell=\tilde e_\ell+\dfrac{1}{2}\partial\left(\dfrac{\tilde \gamma_1}{\tilde\gamma_2}\right) & E_r\equiv e_r=\tilde e_r \\[2.5ex]

F_\ell\equiv f_\ell+\dfrac{1}{2}\partial\left(\dfrac{ \gamma_2}{\gamma_1}\right)=\tilde f_\ell {\quad\quad} &
F_r\equiv f_r+\dfrac{1}{2}\partial\left(\dfrac{\gamma_3}{\gamma_1}\right)+\dfrac{\partial\gamma_3}{\gamma_1}\\&=\tilde f_r+\dfrac{1}{2}\partial\left(\dfrac{\tilde \gamma_3}{\tilde \gamma_2}\right)+\dfrac{\partial\tilde\gamma_3}{\tilde\gamma_2}.
\end{array}
\label{sec:su}
\end{equation}
As one can see, we regrouped terms in the expression, so the sections are actually smooth and well-defined on the whole patch.
For example, the term $ \frac{\partial \gamma_{1}}{\gamma_{1}} $ would have a pole on the second patch, where $ \gamma_{1} $ can be equal to zero. 
Thus, it is only defined smoothly on the first patch.

The OPEs, as expected, constitute the affine Kac-Moody vertex algebra.
For example, the left OPEs are:
\begin{align*}
   H_\ell(z)E_\ell(w)&\sim \frac{2E_\ell(w)}{z-w},  &  H_\ell(z)H_\ell(w)&\sim \frac{-3}{(z-w)^2},\\
   H_\ell(z)F_\ell(w)&\sim \frac{-2F_\ell(w)}{z-w},  & E_\ell(z)F_\ell(w)&\sim \frac{-3/2}{(z-w)^2}+\frac{H_\ell(w)}{z-w}
,\end{align*}
which make it into $V_{-3/2}\left(\mathfrak{sl}(2,\mathbb{C})\right)$.
From the anomaly inflow argument we know that the total level should be $ k_{\ell} + k_{r} = - 2h^{ \vee }=-4$. 
Thus, the right algebra is the affine algebra $ V_{-5/2}\left( \mathfrak{sl}\left( 2, \mathbb{C} \right)  \right)  $, which we could again check by a direct computation.

Operator products between all the left and right global sections are non-singular. The level is defined with respect to the standard bilinear form $(,)=(2h^{\vee})^{-1}(,)_K$, where $(,)_K$ is the Killing form. In the case of $\mathfrak{sl}(2,\mathbb{C})$ it is given by $(h,h)=2,\ (e,f)=(f,e)=1$.

\subsection*{General $ k $}

To find the most general form of this algebra, we need to find the moduli space of this CDO.
By section \ref{sec:basics} we know that it is equivalent to finding $H^{ 1 }\left( \Omega^{ 2,cl },\text{SL}\left( 2,\mathbb{C} \right)  \right) $.
We want to show that 
\begin{equation}
   \label{eq:mu_moduli}
   \begin{split}
      \mu = 2t \frac{\dd{\gamma_1} \wedge \dd{\gamma_2}}{\gamma_1 \gamma_2},\quad t\in\C,
   \end{split}
\end{equation}
is the only generator of that cohomology.
We show it in three steps.
First, we use that $  H^{ 1 }\left(\text{SL}\left( 2,\mathbb{C} \right), \Omega^{ 2,cl } \right)  \to H^{ 3 }_{\textrm dR}\left( \text{SL}\left( 2,\mathbb{C}\right), \mathbb{C}  \right) $ is injective (see Appendix \ref{sec:app}).
Second, it is known that the 3rd de Rham cohomology for simple Lie groups is generated by $ \Tr \left( g^{-1}\dd{g} \right) ^3 $, i.e., $ H^{ 3 }_{\textrm dR}\left( \text{SL}\left( 2,\mathbb{C} \right)  \right) \cong \mathbb{C} $.
Third, the form above is well-defined on $ U_{12} = U_1\cap U_2 \cong \mathbb{C}^{ * }\times \mathbb{C}^{ * }\times \mathbb{C} $ and has a non-trivial period over a non-contractible cycle on the intersection.
Thus, it means that it represents a non-trivial class in $H^1(\text{SL}(2,\C),\Omega^{2,cl})$.
So, we showed that $  H^{ 1 }\left( \Omega^{ 2,cl },\text{SL}\left( 2,\mathbb{C} \right)  \right)  \cong H^{ 3 }_{\textrm dR}\left( \text{SL}\left( 2,\mathbb{C}\right), \mathbb{C}  \right) $ and \ref{eq:mu_moduli} is the non-trivial element. The coefficient $t$ there is thus the only CDO modulus. 

After all these preparations, we can finally redo the calculations with the transformation law shifted by $ \mu $ (\ref{eq:quan_beta}). One gets the following sections:
\begin{equation}
    \begin{array}{ll}
        -H_{\ell} \equiv h_{\ell} + t \dfrac{\gamma_1' 
        }{\gamma_1} = \widetilde{h}_{\ell} - t \dfrac{\gamma_2'}{\gamma_2}   \quad &
        -H_{r} \equiv h_{r} + (1-t) \dfrac{\partial \gamma_1}{\gamma_1} \\
        &=\widetilde{h}_{r} + \left( 1 - t \right) \dfrac{\partial \gamma_2}{\gamma_2}\\
        E_{\ell} \equiv e_{\ell } = t\dfrac{\gamma_1'}{\gamma_2} + \widetilde{e}_{\ell} + \dfrac{1}{2} \partial \left( \dfrac{\gamma_1}{\gamma_2} \right)      \quad &
        E_{r} \equiv e_{r} = \widetilde{e}_{r}\\
        F_{\ell} \equiv f_{\ell} + \dfrac{1}{2} \partial \left( \dfrac{\gamma_2}{\gamma_1} \right) + t\dfrac{\gamma_2'}{\gamma_1} = \widetilde{f}_{\ell}\quad&
        F_r\equiv f_r+\dfrac{1}{2}\partial\left(\dfrac{\gamma_3}{\gamma_1}\right)+(1-t)\dfrac{\partial\gamma_3}{\gamma_1}\\                       &=\tilde f_r+\dfrac{1}{2}\partial\left(\dfrac{\tilde \gamma_3}{\tilde \gamma_2}\right)+(1-t)\dfrac{\partial\tilde\gamma_3}{\tilde\gamma_2}.
    \end{array}
\end{equation}

Effectively,  we observe that introducing the form \eqref{eq:mu_moduli} leads to a shift  of levels $ k_{\ell} \to  k_{\ell} - t $ and $ k_{r} \to  k_{r} +t $.
We will set $ t = \frac{1}{2 } + k $ for convenience, where $k$ now is the actual Chern-Simons level.
The levels of the boundary algebras are then $ -2-k $ and $ -2+k $.
The quantization condition is not necessary in this approach, but is necessary from the 3D perspective.
In this context it means that $ k \in  \mathbb{Z} $.
Thus, the affine algebras of the global left and right $G$-action sections are $ V_{-2 \pm k}\left( \mathfrak{sl}\left( 2,\mathbb{C} \right)  \right)  $.
The explicit form of these sections is one of the key technical results of this chapter.

\subsection*{Module Structure}
To reveal the module structure of $\cD_k[{\rm SL}(2,\C)]$ with respect to $V_{-2 \pm k}\left( \mathfrak{sl}\left( 2,\mathbb{C} \right)  \right)$, let us consider:

\begin{equation}
\begin{array}{ll}
E_\ell(z)\gamma_1(w)\sim 0 & E_\ell(z)(-\gamma_2)(w)\sim \dfrac{\gamma_1(w)}{z-w}\\[2.5ex]
H_\ell(z)\gamma_1(w)\sim\dfrac{\gamma_1(w)}{z-w} & H_\ell(z)(-\gamma_2)(w)\sim \dfrac{\gamma_2(w)}{z-w}\\[2.5ex]
F_\ell(z)\gamma_1(w)\sim \dfrac{-\gamma_2(w)}{z-w} {\quad\quad}& F_\ell(z)(-\gamma_2)(w)\sim 0.
\end{array}
\end{equation}
Thus, as before, $ \gamma $'s generate modules for our boundary algebras and are identified with the Wilson lines in the 3D description.
One finds that we quotient out the singular vector of the underlying $ \mathfrak{sl}_2 $ algebra, i.e. $ (F_{\ell}^{0}){}^2 \gamma_1 = 0 $.

One also finds that the vectors $(\gamma_1)_{0}\ket{0}$, $-(\gamma_2)_{0}\ket{0}$, and all vectors obtained from them by the action of the negative modes of $E_\ell,\ H_\ell,$
and $F_\ell$ span a module over the left current algebra, with the vector $(\gamma_1)_{0}\ket{0}$ being the highest weight vector of weight $1$\footnote{We assumed the mathematical notation, where representations of $\mathfrak{sl}_2$ are labeled by integers, not half-integers.}.
There is an isomorphic module over this subalgebra, ``generated'' by $\gamma_3$ and
$\gamma_{1}^{-1}(1+\gamma_2\gamma_3)$, with the first field giving the highest vector.
One also finds analogous modules over the right
current algebra, ``generated'' by pairs of global sections $\gamma_1, \gamma_3$
and $\gamma_2,\gamma_1^{-1}(1+\gamma_2\gamma_3)$, where again the first field
in each pair defines the highest weight vector of weight $1$. 
Note that these expressions indeed define global sections due to (\ref{eq:gamma_transform}).
All global functions depend only on $(\gamma_1,\gamma_2,\gamma_3,\gamma_1^{-1}(1+\gamma_2\gamma_3))$ and are modules for zero modes of our currents.

Let us put these building blocks together and consider the vector space:
\[
\begin{tikzcd}
    & (\gamma_1)_{0}\ket{0} \arrow[]{dl}{} \arrow[]{dr}{} &  \\
(\gamma_2)_{0}\ket{0} \arrow[]{dr}{} &  & (\gamma_3)_{0}\ket{0} \arrow[]{dl}{} \\
 & \left( \frac{1+\gamma_2\gamma_3}{\gamma_1} \right) _{0}\ket{0} &  
\end{tikzcd}
\]
This is $(1,1)$ representation for $\mathfrak{sl}(2)_\ell\otimes \mathfrak{sl}(2)_r$.
We can act on this vector space by negative modes of $J^a_\ell$ and $J^a_r$. 
The vector $(\gamma_1)_{0}\ket{0}$ is the highest weight vector of weight $(1,1)$ in the representations of the corresponding $ \hat{g}_{\ell} \otimes  \hat{g}_{r}  $ affine Kac-Moody algebras.
In order to obtain other representation one can act with higher powers of $\gamma_1^n$ on the vacuum and this yields  representation $(n,n)$.
So, the answer at the generic point is 
\begin{equation}
\cD_k[\text{SU}(2)_\C] = \bigoplus_{\lambda \in  \mathbb{Z_+}} V_{\lambda,-2 +k}\left( \mathfrak{g} \right)  \otimes V_{\lambda, -2 -k}\left( \mathfrak{g} \right),
\end{equation}
where again $V_{\lambda,-2 \pm k}$ are Weyl modules.

Two points require important clarifications. 
First, what happens with the stress-energy tensor of the $ \beta \gamma $ system $ -\beta_{i}\partial \gamma^{ i } $?
It is guaranteed to exist by \cite{gorbounov1999gerbes} as the canonical bundle on a Lie group is trivial.
The holomorphic top form can be written as $ w = \dfrac{d \gamma_1 d \gamma_2 d \gamma_3}{\gamma_1} $ on the first patch.
Thus, the stress-energy tensor gets corrected to 
\begin{equation}
    \begin{split}
        T\left( z \right)  = -\sum \beta_{i}\partial \gamma_{i} + \frac{1}{2} \left(\log \gamma_1 \right) '',
    \end{split}
\end{equation}
where the correction is a derivative of the coefficient of the holomorphic top form. 
One can show by a direct computation that outside of the critical levels of the boundary algebras,
\begin{equation}
   \begin{split}
      T_{\beta \gamma} = T_{\ell}+ T_{r},
   \end{split}
\end{equation}
where $ T_{\ell,r} = \frac{1}{2(k_{\ell,r}+h^\vee)} \left( ef + fe + \frac{hh}{2}  \right)  $ are the Sugawara stress-energy tensors.
This result was expected from the general discussion in \ref{sec:3d}.

The second point is that the modules that we are considering are reducible for the physical values of $ k \in \mathbb{Z}$.
Not only that, but those singular vectors are singular for both left and right algebras \cite{ZHU20081513}.
To see this, let us set the right algebra level $k_{r} $ to be 0.
It is an obvious limiting case, but it will nevertheless show the important feature that is carried over to other values of $k$.
The simplest singular vector for this module  can be found to be
\begin{equation}
   \begin{split}
      (E_{r})_{-1}\ket{0}.
   \end{split}
\end{equation}
We need to find the form of this vector in terms of the left-invariant fields.
Classically, the vector fields are related by the following change of basis:
\begin{equation}
   \begin{split}
      \begin{pmatrix} 
      e_{r}\\
      f_{r}\\
      h_{r}
      \end{pmatrix}  = \begin{pmatrix} 
      -\gamma_2^2 & \gamma_1^2 & - \gamma_1 \gamma_2\\
      \frac{(1+\gamma_2 \gamma_3)^2}{\gamma^2} & -\gamma_3^2 & \frac{\gamma_3 \left( 1+ \gamma_2 \gamma_3 \right) }{\gamma_1} \\
      2 \frac{\gamma_2 \left( 1+\gamma_2 \gamma_3 \right) }{\gamma_1}                                         &- 2 \gamma_2 \gamma_3 & 1 + 2 \gamma_2 \gamma_3
      \end{pmatrix} 
      \begin{pmatrix} 
      e_{\ell}\\
      f_{\ell}\\
      h_{\ell}
      \end{pmatrix} =
      S\cdot V_{ \ell }
   \end{split}
\end{equation}
Of course, at the quantum level the relation is corrected, and for the $ e $ field the correct answer is found to be
\begin{equation}
   \begin{split}
      E_{r} = V_{\ell}^{ i }S_{1 i} + (-2+k)( \gamma_1 \partial \gamma_2 - \gamma_2 \partial \gamma_1).
   \end{split}
\end{equation}

So, we see that for the special value $ k = 2 $, the correction term disappears, and the vector $ E^{ r }_{-1}  $ can now be obtained both from the left and right algebras. 
It is actually lying inside the $ \gamma_1 ^2 $ representation for the left algebra.
Thus for discrete values of $ k $, different modules start to intersect.
Moreover, now there is no way to obtain this correction term $\omega_f = \gamma_1\partial \gamma_2- \gamma_2 \partial  \gamma_1 $ from a Wilson line by the action of $\hat{\mathfrak{g}}_\ell \otimes \hat{\mathfrak{g}}_r$.
Note that this form is actually dual to the $ f $-vector field $<\omega_f,f> = 1$.
This phenomenon happens for all singular vectors  \cite{ZHU20081513}.


One could ask what happens when we include monopoles. 
We do not have a definitive answer, but as mentioned in the introduction and in Section \ref{sec:nonpert}, we have conjectures as to what the answer might look like. One expects to get some sort of truncation of the CDO that contains simple quotients $L_{-h^\vee\pm k}(\mathfrak{g})$ rather than $V_{-h^\vee\pm k}(\mathfrak{g})$.
We will look into this issue elsewhere.


\subsection{Open Questions}
We have already emphasized many times that determining the non-perturbative modification of $\cD_k[G_\C]$ is an interesting problem.
It requires, perhaps, improved understanding of the non-compact models in 2D, of which our theory is an example.
The usual arguments with quotienting out singular vectors based on the  unitarity of the Hilbert space do not work in such theories.
But we expect that monopoles on the boundary with positive $ k-h^\vee $ are still required, 
as in the opposite limit $ \gamma\gg 1  $ this boundary is  a relative CFT with a normalizable vacuum.
The problems lie on the other boundary which has a  non-compact mode \cite{DL1} that  effectively renders our theory non-compact.

Another intriguing question arises from an alternative UV completion of the $G_\C$ NLSM via the 2D Landau-Ginzburg (LG) model described in \cite{DL1}.
The simplest example is for $ G = SU\left(N  \right)  $.
The UV completion is chosen to be the $\cN=(0,2)$ LG model with the following field content:
\begin{enumerate}
   \item $ M^{ i }_{j} $ are chiral multiplets  valued in complex matrices $ \operatorname{Mat} (N,\mathbb{C}) $; $ \Phi^{ a }_{i} $ is a chiral multiplet that is bifundamental under $ U\left( k \right) \times G $, where $ i,j \in {1,\ldots,N }$ and $ a \in {1,\ldots,(k} = $ anomaly).
   \item Fermi multiplets $ \Gamma $ and $ \Lambda ^{ j }_{b} $. 
   \item Superpotential $ W = \Gamma \left( \operatorname{det} M - 1 \right) + \mu \Lambda_{a}^{ j }M^{ i }_{j} \Phi^{ a }_{ i} $.
\end{enumerate}
This model has the same anomalies as our theory, and the superpotential is engineered in such a way that it flows to the SL$(N,\C)$ NLSM.
It is generally believed that LG models do not carry any non-perturbative physics.
Thus one could hope that the perturbative chiral algebra in this model could provide some useful information.
It is captured by the $\beta\gamma$ systems $(V_i^j,M^i_j)$ (here $V$ is the ``beta'' for $M$), $(R^i_a, \Phi_i^a)$, and the $bc$ systems $(\bar\Gamma, \Gamma)$ and $(\bar\Lambda^a_i, \Lambda_a^i)$.
The chiral algebra is defined in the cohomology of $Q$ that acts according to:
\begin{equation}
\begin{split}
    Q\bar\Gamma &= \det M -1,\qquad\qquad\quad Q \bar\Lambda = M\Phi,\\
    Q V_i^j &= \Gamma \frac{\partial \det M}{\partial M^i_j} + \mu \Lambda^j_a \Phi^a_i,\quad Q R^i_a = \mu \Lambda^j_a M_j^i,
\end{split}
\end{equation}
and by zeros on the rest of fields.
All these $\beta\gamma$ and $bc$ systems are already globally defined, so one simply computes the cohomology of such $Q$.
The answer appears to be just $\cD_k[{\rm SL}(2,\C)]$, which would be interesting to prove.
But more importantly, this, supposedly exact, answer in the LG model is the same as the perturbative VOA we find in the interval theory.
This suggests that the exact non-perturbative physics in these models may depend on the UV completion.

Other open questions involve applications to the VOA$[M_4]$, which requires computing the interval reductions of more complicated gauge theories, and which we will study in the future works.

\section{Conclusion}
\label{sec:concl}
In this paper we considered the chiral algebra of a 3D $\mathcal{N} =2  $ YM on $ \mathbb{R}^2\times \left[ 0,L \right]  $ with the $\cN=(0,2)$ Dirichlet boundary conditions.
The algebra was computed both from the 3D and 2D perspectives.
We analyzed this protected sector using the holomorphic-topological twist of the 3D theory and, among other things, the holomorphic twist of the 2D theory.

From the 3D perspective, the perturbative algebra was found to be an enhancement of two affine vertex algebras living at the boundaries by their bimodules realized via the Wilson lines. The boundary monopoles seem to modify the answer non-perturbatively, on which we proposed some conjectures.

The two-dimensional system after reduction in the right regime is an $\cN= \left( 0,2 \right)  $ NLSM into $ G_{\mathbb{C}} $.
The compactification algebra is the chiral algebra of this  2D model, and the beta-gamma system is the main tool to compute its perturbative approximation. 
The global sections corresponding to the left and right actions of the group on itself were explicitly found for $G={\text{ SU}}(2)$.

In the abelian case, we find that the spectrally flowed modules of the $\beta\gamma$ system are required to get the full result for the algebra.
Combining the latter perspective with the 3D analysis and with the known results on the sigma model into $\C^*$, we obtain three different presentations of the chiral algebra (also called the interval VOA) in \eqref{eq:big_answer}. 
We also saw that the stress-energy tensor is decomposed in terms of the Sugawara stress tensors for the boundary symmetries, both in the abelian and the non-abelian cases.
The answers, when available, fully agree between the 2D and 3D calculations.
Some puzzles and speculations are discussed towards the end and in Section \ref{sec:nonpert}.

\subsection*{Acknowledgements}
We benefited from the useful discussions and/or correspondence with: A.~Abanov, T.~Creutzig, T.~Dimofte, D.~Gaiotto, Z.~Komargodski, I.~Melnikov, N.~Nekrasov, W.~Niu, M.~Ro{\v{c}}ek.
\appendix
\section{De Rham cohomology}
\label{sec:app}
In this appendix we will show that $  H^{1}\left( \text{SL}\left( 2,\mathbb{C} \right), \Omega^{2,\text{cl}} \right)  \to H^3_{\rm dR} \left(\text{SL}(2,\C)\right) $ is injective.
$ H^{1}\left( \text{SL}\left( 2,\mathbb{C} \right), \Omega^{2,\text{cl}} \right)  $ is isomorphic to $\sfrac{\mathcal{Z}_d\left( \Omega^{3,0}\oplus \Omega^{2,1} \right) }{d \Omega^{2,0}}$ \cite{witten2007}.
There is an obvious map from $\sfrac{\mathcal{Z}_d\left( \Omega^{3,0}\oplus \Omega^{2,1} \right) }{d \Omega^{2,0}}$ to the third de Rham cohomology group $H^3_{\mathrm dR}(\text{SL}(2,\mathbb{C}))$ given by $[\alpha]\mapsto [\alpha]$ for any closed 2-form $\alpha \in \Omega^{3,0}\oplus \Omega^{2,1}$. 
\begin{proposition}
For any $[\alpha]\in \sfrac{\mathcal{Z}_d\left( \Omega^{3,0}\oplus \Omega^{2,1} \right) }{d \Omega^{2,0}}$ there exists $\beta\in \cZ_d\left(\Omega^{3,0}\right)$ such that 
\begin{equation}
[\alpha]=[\beta],    
\end{equation}
so there is an isomorphism:
\begin{equation}
\mathcal{A}\equiv\faktor{\mathcal{Z}_d\left( \Omega^{3,0}\oplus \Omega^{2,1} \right) }{d \Omega^{2,0}}\cong \faktor{\mathcal{Z}_d\left( \Omega^{3,0}\right) }{d \Omega^{2,0}},
\end{equation}
where we have made use of a slight abuse of notation, and $d \Omega^{2,0}$ in the last quotient should be understood as $d\Omega^{2,0}\cap \Omega^{3,0}$.
\end{proposition}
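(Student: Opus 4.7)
The plan is to use the bigrading of differential forms on the complex 3-manifold $\text{SL}(2,\mathbb{C})$ together with Dolbeault vanishing to absorb the $(2,1)$-component of any closed representative. Given $[\alpha]\in\mathcal{A}$ with representative $\alpha=\alpha^{3,0}+\alpha^{2,1}$, the equation $d\alpha=0$ decomposes by type into $\partial\alpha^{3,0}=0$ (automatic, since $\Omega^{4,0}=0$), $\bar\partial\alpha^{3,0}+\partial\alpha^{2,1}=0$, and $\bar\partial\alpha^{2,1}=0$. The last relation means $\alpha^{2,1}$ is a $\bar\partial$-closed $(2,1)$-form.

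The key input is that $\text{SL}(2,\mathbb{C})$, being the affine variety cut out by $ad-bc=1$ in $\mathbb{C}^4$, is a Stein manifold. By Cartan's Theorem~B, the Dolbeault cohomology $H^{p,q}_{\bar\partial}(\text{SL}(2,\mathbb{C}))$ vanishes for all $q>0$. In particular $H^{2,1}_{\bar\partial}=0$, so there exists a smooth $(2,0)$-form $\delta\in\Omega^{2,0}$ with $\bar\partial\delta=\alpha^{2,1}$. Setting
\begin{equation}
\beta:=\alpha-d\delta=\bigl(\alpha^{3,0}-\partial\delta\bigr)+\bigl(\alpha^{2,1}-\bar\partial\delta\bigr)=\alpha^{3,0}-\partial\delta,
\end{equation}
we obtain a form of pure type $(3,0)$. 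It is automatically $d$-closed since $d\beta=d\alpha-d^2\delta=0$; equivalently, $\bar\partial\beta=0$, so $\beta$ is a holomorphic $3$-form. By construction $[\alpha]=[\beta]$ in $\mathcal{A}$, which is exactly the first claim.

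To upgrade this to the stated isomorphism $\mathcal{A}\cong\mathcal{Z}_d(\Omega^{3,0})/d\Omega^{2,0}$, I would check the two natural maps are mutually inverse. Well-definedness of the surjection $[\alpha]\mapsto[\beta]$ amounts to showing that if $\alpha'-\alpha=d\eta$ for $\eta\in\Omega^{2,0}$, then the respective $\beta',\beta$ differ by an element of $d\Omega^{2,0}\cap\Omega^{3,0}$; this follows from a Dolbeault-Poincar\'e argument identical to the one above, because $\bar\partial(\delta'-\delta-\eta)=0$ forces $\delta'-\delta-\eta$ to be holomorphic. Injectivity is the easy direction: if $\beta=d\tilde\delta$ with $\tilde\delta\in\Omega^{2,0}$, then the $(2,1)$-component $\bar\partial\tilde\delta$ of $d\tilde\delta$ must vanish since $\beta\in\Omega^{3,0}$, so $\tilde\delta$ is automatically holomorphic and $d\tilde\delta=\partial\tilde\delta\in d\Omega^{2,0}\cap\Omega^{3,0}$.

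The only real content is the vanishing of $H^{2,1}_{\bar\partial}(\text{SL}(2,\mathbb{C}))$, and the main ``obstacle'' is merely citing the appropriate Stein/Cartan-B result; everything else is bookkeeping in bidegree. Notably, the same argument applies to any Stein complex Lie group and not just $\text{SL}(2,\mathbb{C})$, which is consistent with the use the paper makes of it only via the general structure of $H^1(\Omega^{2,\mathrm{cl}})$.
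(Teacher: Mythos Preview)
Your proof is correct and follows essentially the same route as the paper: decompose the closed representative by bidegree, use the Stein property of $\text{SL}(2,\mathbb{C})$ to obtain $H^{2,1}_{\bar\partial}=0$ and hence $\alpha^{2,1}=\bar\partial\delta$, and shift by $d\delta$ to land in $\Omega^{3,0}$. Your treatment is actually slightly more thorough than the paper's, which stops after producing the $(3,0)$ representative; you additionally verify well-definedness of the induced map and injectivity of the inclusion, neither of which the paper spells out.
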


\begin{proof}
	A general form from $\mathcal{A}$ has the  form $\alpha+\beta$ for some $ \alpha \in \Omega^{3,0} $ and $ \beta \in  \Omega^{2,1}  $.
	The closeness conditions are
	\begin{equation}
		\begin{split}
			\overline{\partial }\alpha+\partial \beta=0,\\
			\overline{\partial} \beta=0.
		\end{split}
	\end{equation}
	The second condition says that $ \beta \in \mathcal{Z}_{\overline{\partial }}\left( \Omega^{2,1} \right)$, and using the fact that $\text{SL}\left( 2,\mathbb{C} \right) $ is a Stein manifold with all positive degree Dolbeault cohomology groups vanishing $H^{\cdot ,\cdot \ge 1}_{\overline{\partial }}=0$, one gets that the form $\beta$ is in fact exact: $ \beta=\overline{\partial } \gamma $ for some $\gamma\in \Omega^{2,0}$.
  So, shifting $\alpha+\beta$ by $-d\gamma$, we get the desired representative in $\Omega^{3,0}$.\\
	\end{proof}

\begin{proposition}
The map from $\mathcal A$ to $H^3_{\rm dR}$ defined above is injective.
\end{proposition}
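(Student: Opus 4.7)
The plan is to use the previous proposition to assume the cocycle representative is already in $\Omega^{3,0}$, and then use the Dolbeault lemma (applicable because $\mathrm{SL}(2,\C)$ is Stein, so $H^{p,q}_{\bar\partial}=0$ for all $q\geq 1$) to upgrade an arbitrary $d$-primitive in $\Omega^2_{\C}$ to a holomorphic one in $\Omega^{2,0}$.

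Concretely, pick $[\alpha]\in\cA$ with $\alpha\in\cZ_d(\Omega^{3,0})$, and suppose $[\alpha]=0\in H^3_{\rm dR}$, i.e.\ $\alpha=d\eta$ for some complex $2$-form $\eta$. Decompose $\eta=\eta^{2,0}+\eta^{1,1}+\eta^{0,2}$ and separate $d\eta=\alpha$ into Dolbeault types, obtaining the four equations
\begin{equation*}
\partial\eta^{2,0}=\alpha,\qquad \bar\partial\eta^{2,0}+\partial\eta^{1,1}=0,\qquad \bar\partial\eta^{1,1}+\partial\eta^{0,2}=0,\qquad \bar\partial\eta^{0,2}=0.
\end{equation*}

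Now peel off components using the Dolbeault lemma. Since $H^{0,2}_{\bar\partial}(\mathrm{SL}(2,\C))=0$, there exists $\xi^{0,1}\in\Omega^{0,1}$ with $\bar\partial\xi^{0,1}=\eta^{0,2}$. The form $\eta^{1,1}-\partial\xi^{0,1}$ is then $\bar\partial$-closed, since $\bar\partial(\eta^{1,1}-\partial\xi^{0,1})=-\partial\eta^{0,2}+\partial\bar\partial\xi^{0,1}=0$ by the third and the above equations and $\bar\partial\partial=-\partial\bar\partial$. Using $H^{1,1}_{\bar\partial}=0$, find $\xi^{1,0}\in\Omega^{1,0}$ with $\bar\partial\xi^{1,0}=\eta^{1,1}-\partial\xi^{0,1}$. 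Set
\begin{equation*}
\gamma:=\eta^{2,0}-\partial\xi^{1,0}\in\Omega^{2,0}.
\end{equation*}
A direct check gives $\partial\gamma=\partial\eta^{2,0}=\alpha$ and $\bar\partial\gamma=\bar\partial\eta^{2,0}-\bar\partial\partial\xi^{1,0}=-\partial\eta^{1,1}+\partial(\eta^{1,1}-\partial\xi^{0,1})=0$, so $\gamma$ is a holomorphic $2$-form with $d\gamma=\partial\gamma=\alpha$. Therefore $\alpha\in d\Omega^{2,0}\cap\Omega^{3,0}$, hence $[\alpha]=0$ in $\cA$, proving injectivity.

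The only subtle step is the iterative use of the Dolbeault lemma to promote $\eta$ to a holomorphic primitive, which is where the Stein property of $\mathrm{SL}(2,\C)$ enters essentially; everything else is bookkeeping of types. The same argument works on any Stein manifold and is a standard consequence of the holomorphic de Rham isomorphism.
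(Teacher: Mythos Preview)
Your proof is correct and follows essentially the same approach as the paper: decompose the $d$-primitive of $\alpha$ by Hodge type and iteratively use the vanishing of positive-degree Dolbeault cohomology on the Stein manifold $\mathrm{SL}(2,\C)$ to peel off the $(0,2)$ and $(1,1)$ pieces until only a holomorphic $(2,0)$ primitive remains. Your version is slightly more explicit in writing out all four type equations at once and in verifying directly that the final $\gamma$ is holomorphic, but the logic is identical.
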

\begin{proof}
Suppose we have a closed (3,0)-form $\omega$ that goes under the map to zero in $H^3_{\rm dR}(\text{SL}(2,\mathbb{C}))$, i.e. $\omega=d\alpha$ for some $\alpha\in \Omega^{2,0}\oplus\Omega^{1,1}\oplus\Omega^{0,2}$. Let us represent $\alpha$ as $\alpha^{(2,0)}+\alpha^{(1,1)}+\alpha^{(0,2)}$, where each $\alpha^{(p,q)}\in \Omega^{p,q}$. Thus, 
\begin{equation}
\omega=d\alpha^{(2,0)}+d\alpha^{(1,1)}+d\alpha^{(0,2)},
\end{equation}
and as $\omega \in \Omega^{3,0}$ we find that $\overline{\partial}\alpha^{(0,2)}=0$.
Recalling that $\text{SL}(2,\mathbb{C})$ has trivial non-zero Dolbeault cohomology groups as a Stein manifold, one obtains that $\alpha^{(0,2)}=\overline{\partial}\gamma^{(0,1)}$ for some $\gamma^{(0,1)}\in\Omega^{0,1}$.
It means, however, that $d\alpha^{(0,2)}\equiv\partial\overline\partial\gamma^{(0,1)}+\overline\partial^2\gamma^{(0,1)} = -\overline{\partial }\partial \gamma^{ \left( 0,1 \right)  } = \overline{\partial } \beta^{ \left( 1,1 \right)  }$.
Redefining $ \alpha ^{  \left( 1,1 \right) } $,
\begin{equation}
\omega=d\alpha^{(2,0)}+d\alpha^{(1,1)}.    
\end{equation}
Repeating the same argument with $\alpha^{(1,1)}$, we obtain that $\omega=d\alpha^{(2,0)}$, meaning that it was trivial in $\mathcal A$, which proves the statement.
\end{proof}
Let us show that the only generator of $H^3_{\rm dR}(\text{SL}(2,\mathbb{C}))$ (which is $\operatorname{Tr}(g^{-1}dg)^3$) after mapping to $\mathcal{A}$ indeed corresponds to the closed holomorphic (2,0)-form $\mu$ in $ H^{1}\left( \text{SL}\left( 2,\mathbb{C} \right), \Omega^{2,\text{cl}} \right)$ from the Eq. \eqref{eq:mu_moduli}:     
\begin{equation}
    \mathcal{H}\equiv\Tr\left(g^{-1}dg\right)^3=2 \frac{d\gamma_1\wedge d\gamma_2\wedge d\gamma_3}{\gamma_1}=2\frac{d\tilde\gamma_1\wedge d\tilde\gamma_2\wedge \tilde d\gamma_3}{\tilde\gamma_2}
\end{equation}
Following the general algorithm explicitly constructing the isomorphism between $ H^{1}\left( \text{SL}\left( 2,\mathbb{C} \right), \Omega^{2,\text{cl}} \right)$ and $\mathcal{A}$, we must find two (2,0)-forms $T_1$ and $T_2$ in each of the open sets $U_1$ and $U_2$, such that $\mathcal H=dT_1$ in $U_1$ and $\mathcal H=dT_2$ in $U_2$, and take their difference $T_{12}\equiv T_1-T_2$ in $U_1\cap U_2$ to produce the element of $\mathcal A$:  
\begin{equation}
\frac{d\gamma_1\wedge d\gamma_2\wedge d\gamma_3}{\gamma_1}=d\left( \frac{\gamma_3}{\gamma_1}d\gamma_1\wedge d\gamma_2\right),   
\end{equation}
\begin{equation}
\frac{d\tilde\gamma_1\wedge d\tilde\gamma_2\wedge \tilde d\gamma_3}{\tilde\gamma_2}=d\left( \frac{\tilde\gamma_3}{\tilde\gamma_2}d\tilde\gamma_1\wedge d\tilde\gamma_2\right),   
\end{equation}
\begin{equation}
\frac{\gamma_3}{\gamma_1}d\gamma_1\wedge d\gamma_2-\frac{\tilde\gamma_3}{\tilde\gamma_2}d\tilde\gamma_1\wedge d\tilde\gamma_2=-\frac{d\gamma_1\wedge d\gamma_2}{\gamma_1\gamma_2},    
\end{equation}
which indeed coincides with the Cech cohomology class defined by $\mu$ on $U_1\cap U_2$. 

\printbibliography

\end{document}